\newtheorem{corollary}{Corollary}
\newcounter{remark}
\def\remark{\addtocounter{remark}{1}\def\@currentlabel{\theremark}%
\emph{Remark~\theremark}. } \makeatother
\ifCLASSOPTIONtwocolumn\toggletrue{twocolumn}\else\togglefalse{twocolumn}\fi
  \def\figwidth{.95\linewidth}}{
  \def\figwidth{.58\linewidth}}
\def\C#1x#2{\setC^{#1\times #2}}
\def\CN{${\mathcal C\mathcal N}(0,1)$}
\def\Pd{\rho_d}
\def\hH{\widehat{\mathbf{H}}}
\def\tH{\widetilde{\mathbf{H}}}
\def\Yp{\mathbf{Y}_\text{p}}
\def\figref#1{Fig.~\ref{#1}}
\def\thref#1{Theorem.~\ref{#1}}
\def\remref#1{Remark~\ref{#1}}
\def\adag{\alpha^\dag}
\def\rhomax{\rho_\text{max}}
\def\maximize{\mathop{\text{maximize}}}
\def\leref#1{Lemma~\ref{#1}}
\def\figref#1{Fig.~\ref{#1}}
\newtheorem{lemma}{Lemma}
\newtheorem{theorem}{Theorem}
\title{Achievable Throughput and Training Optimization of Uplink
Multiuser Massive MIMO Systems}
\author{Songtao Lu and Zhengdao Wang

\thanks{Parts of this work were presented at 2014 IEEE Global Communications Conference (GLOBECOM), Austin, TX USA, Dec. 8--12, 2014 \cite{luwa14g3} and 2015 IEEE Wireless Communications and Networking Conference (WCNC), New Orleans, LA USA, Mar. 9--12, 2015 \cite{luwa15}.}
% \thanks{The work in this paper was supported in part
% by NSF Grants No.~1218819 and No.~1308419.}
\thanks{Songtao Lu and Zhengdao Wang are
with the Department of Electrical and Computer Engineering,
    Iowa State University, Ames, IA 50011, USA (e-mail: songtao@iastate.edu;
zhengdao@iastate.edu).} }
\begin{document}
\maketitle

% \tableofcontents
\begin{abstract}
We study the performance of uplink transmission in a large-scale (massive)
MIMO system, where all the transmitters have single antennas and the base
station has a large number of antennas. Specifically, we first derive the
rates that are possible through minimum mean-squared error (MMSE) channel
estimation and three linear receivers: maximum ratio combining (MRC),
zero-forcing (ZF), and MMSE. Based on the derived rates, we quantify the
amount of energy savings that are possible through increased number of
base-station antennas or increased coherence interval. We also analyze
achievable total degrees of freedom (DoF) of such a system without assuming
channel state information at the receiver, which is shown to be the same as
that of a point-to-point MIMO channel. Linear receiver is sufficient to
achieve total DoF when the number of users is less than the number of
antennas. When the number of users is equal to or larger than the number of
antennas, nonlinear processing is necessary to achieve the full degrees of
freedom. Finally, the training period and optimal training energy allocation
under the average and peak power constraints are optimized jointly to maximize
the achievable sum rate when either MRC or ZF receiver is adopted at the
receiver.
\end{abstract}

\begin{keywords}
Massive MIMO, uplink, multiuser, channel estimation, energy allocation,
training optimization, degree of freedom (DoF)
\end{keywords}

\iftoggle{twocolumn}{}{\clearpage}

\section{Introduction}

Massive multiple-input multiple-output (MIMO) systems are a type of cellular
communication where the base station is equipped with a large number of
antennas. The base station serves multiple mobile stations that are usually
equipped with a small number of antennas, typically one. Massive MIMO holds
good potential for improving future communication system performance. There
are several challenges with designing such massive MIMO systems, including
e.g., channel state information (CSI) acquisition \cite{sozo14}, base station
received signal processing \cite{gkka14}, downlink precoding with imperfect
CSI \cite{mola13}, signal detection algorithm \cite{haha14}, etc. For
multi-cell system, pilot contamination and inter-cell interference also need
to be dealt with \cite{joas11}. There is already a body of results in the
literature about the analysis and design of large MIMO systems; see e.g., the
overview articles \cite{rpll13,laed14} and references there in.

To reveal the potential that is possible with massive MIMO systems, it is
important to quantify the achievable performance of such systems in realistic
scenarios. For example, it is too optimistic to assume that perfect CSI can be
acquired at the base station in the uplink, because such acquisition takes
time, energy, and channel estimation error will always exist. For the
downlink, in order to perform effective beamforming, CSI is again needed,
which needs to be either estimated by the mobile stations and then fed back to
the base station, which is a non-trivial task, or, acquired by the base
station by exploiting channel reciprocity in a time-division duplexing setup.

\subsection{Scope of this paper}

In this paper, we are interested in performance of the \emph{uplink}
transmission in a \emph{single-cell} system. In particular, we ask what rates
can be achieved in the uplink by the mobile users if we assume realistic
channel estimation at the base station. Similar analysis has been performed in
\cite{nglm13,homa13,qish14}, but the analysis therein assumes equal power
transmission during the channel training phase and the data transmission
phase. Also, the effect of channel coherence interval on system throughput was
discussed in \cite{emmi14} and power allocation and training duration
optimization for the uplink MIMO system were considered in \cite{luwa14g3} for
single-cell system and in \cite{ngma14} for a multi-cell system. However, peak
power constraint was not considered. For a fixed training period, to obtain an
accurate estimate the training power needs to be high to enable enough
training energy. As a result, peak power constraint, if present, may be
violated. The solution is to optimize the training duration also.

If we allow the users to cooperate, then the system can be viewed as a
point-to-point MIMO channel. The rates obtained in \cite{haho03}, and the
stronger result on non-coherent MIMO channel capacity in \cite{zhts02} can
serve as upper bound for the system sum rate. The question is how much of this
sum rate can be achieved without user cooperation and without using elaborate
signaling such as signal packing on Grassmannian manifolds.

For a system with $K$ mobile users, $M$ base station antennas, and block
fading channel with coherence interval $T$, we derive achievable rate using
linear channel estimation and linear base station (front-end) processing,
including maximum ratio combining (MRC), zero-forcing (ZF), and minimum
mean-squared estimation (MMSE) processing. The total degrees of freedom (DoF)
is also quantified. We also quantify the needed transmission power for
achieving a given rate, when $M\gg 1$, which is an refinement of the
corresponding result in \cite{nglm13}.

Furthermore, the energy allocation and training duration are also both
optimized for uplink multi-user (MU) MIMO systems in a systematic way. Two
linear receivers, MRC and ZF, are adopted with imperfect CSI. The average and
peak power constraints are both incorporated. We analyze the convexity of this
optimization problem, and derive the optimal solution. The solution is in
closed form except in one case where a one-dimensional search of a
quasi-concave function is needed. Simulation results are also provided to
demonstrate the benefit of optimized training, compared to equal power
allocation considered in the literature.

The main contributions of this paper are listed as follows:
\begin{enumerate}

\item We consider the energy allocation between training phase and data phase,
and derive achievable rate using linear channel estimation and linear base
station (front-end) processing, including MRC, ZF and MMSE processing.

\item We quantify the total degrees of freedom (DoF) with estimated channels.

\item We quantify the needed transmission power for achieving a given rate,
when $M\gg 1$, which is an refinement of the corresponding result in
\cite{nglm13}.

\item We provide a complete solution for the optimal training duration and
training energy in an uplink MU-MIMO system with both MRC and ZF receiver,
under both peak and average power constraints.
\end{enumerate}

\subsection{Related Works} The throughput of massive MIMO systems has been
studied in several recent papers. For example, the issue of non-ideal hardware
and its effect on the achievable rates were investigated in
\cite{bjho14,emmi14}. In \cite{nglm13,nglm13m}, the achievable rates with
perfect or estimated CSI were derived and scaling laws were obtained in terms
of the power savings as the number base station antennas is increased. For
channel estimation, the training power and training duration were not
optimized for rate maximization. Expressions for uplink achievable rates under
perfect or imperfect CSI were derived in \cite{qish14} for Ricean channels
with an arbitrary-rank deterministic component. For the downlink MIMO
broadcast channel, the optimization over training period and power in both
training phase and feedback phase was investigated in~\cite{koji11}. An
optimized energy reduction scheme was proposed in~\cite{miao13} for uplink MU
MIMO in a single cell scenario, where both RF transmission power and circuit
power consumption were incorporated. Wireless energy transfer using massive
MIMO was considered in \cite{yhzy15}, where the uplink channels were estimated
and the uplink rate for the worst user was maximized. Downlink throughput
scaling behavior was investigated in \cite{beps15}, where it was shown that
unused uplink throughput can be used to trade off for downlink throughput and
that the downlink throughput is proportional to the logarithm of the number of
base-station antennas. Stochastic geometry was used \cite{bahe15} to analyze
the uplink SINR and rate performance of large-scale massive MIMO systems with
maximum ratio combining or zero-forcing receivers. Scaling laws between the
required number of antennas and the number of users to maintain the same
signal to interference ratio (SIR) distributions were derived.

The rest of the paper is organized as follows. A system model is developed in
\secref{systemmodel}. The channel estimation is discussed in
\secref{channelesti}. The achievable rates for linear receivers and the
achievable total degrees of freedom of the system are derived in
\secref{achirate}. \secref{durapeak} formulates the problem of maximization of
achievable rate with both average and peak power constraints. The solution of
this optimization problem is presented in Sections~\ref{sec.sola} and
\ref{sec.solb}. When $M\to \infty$, simplified expressions for the achievable
rates are discussed in \secref{massivecase}. Numerical simulation results are
reported in \secref{sec:simulations} and finally conclusions are drawn in
\secref{sec:conclusion}.

\section{System Model}\label{systemmodel}

Notation: We use $\mathbf{A}^\dag$ to denote the Hermitian transpose of a
matrix $\mathbf{A}$, $\mathbf{I}_K$ to denote a \size KxK identity matrix,
$\setC$ to denote the complex number set, $\lfloor\cdot \rfloor$ to denote the
integer floor operation, \iid to denote ``independent and identically
distributed'', and $\mathcal{CN}(0,1)$ to denote circularly symmetric complex
Gaussian distribution with zero mean and unit variance.

Consider a single-cell uplink system, where there are $K$ mobile users and one
base station. Each user is equipped with one transmit antenna, and the base
station is equipped with $M$ receive antennas. The received signal at the base
station is expressible as
\begin{equation}\label{eq.sys}
 \mathbf{y} = \mathbf{H}\mathbf{s} +\mathbf{n}
\end{equation}
where $\mathbf{H}\in \C MxK$ is the channel matrix, $\mathbf{s}\in \C Kx1$ is
the transmitted signals from all the $K$ users; $\mathbf{n}\in \C Mx1$ is the
additive noise, $\mathbf{y}\in \C Mx1$ is the received signal. We make the
following assumptions:

A1) The channel is block fading such that within a \emph{coherence interval}
of $T$ channel uses, the channel remains constant. The entries of $\mathbf{H}$
are \iid and taken from \CN. The channel changes independently from block to
block. The CSI is neither available at the transmitters nor at the receiver.

A2) Entries of the noise vector $\mathbf{n}$ are \iid and from \CN. Noises in
different channel uses are independent.

A3) The average transmit power per user per symbol is $\rho$. So within a
coherence interval the total transmitted energy is $\rho T$.

In summary, the system has four parameters, $(M,K,T,\rho)$. We will allow the
system to operate in the ergodic regime, so coding and decoding can occur over
multiple coherent intervals.

\section{Channel Estimation} \label{channelesti}

We assume that $K\le M$ and $K<T$ in this section. To derive the achievable
rates for the users, we use a well-known scheme that consists of two phases
(see e.g., \cite{haho03}):

\noindent \emph{Training Phase}. This phase consists of $T_{\tau}$ time
intervals. The $K$ users send time-orthogonal signals at power level
$\rho_{\tau}$ per user. The training signal transmitted can be represented by
a $K\times T_{\tau}$ matrix $\mathbf{\Phi}$ such that
$\mathbf{\Phi}\mathbf{\Phi}^\dag=E\mathbf{I}_K$, where $E=\rho_{\tau}T_{\tau}$
is the total training energy per user per coherent interval. Note that we
require $T_{\tau}\ge K$ to satisfy the time-orthogonality.

\noindent \emph{Data Transmission Phase}. Information-bearing symbols are
transmitted by the users in the remaining $T_d=T-T_{\tau}$ time intervals. The
average power per symbol per user is $\Pd=(\rho T-E)/T_d$.

\subsection{MMSE Channel Estimation}\label{mmsechannelesti}

In the training phase, we will choose $\mathbf{\Phi}=\sqrt{E}\mathbf{I}_K$ for
simplicity. Other scaled unitary matrix can also be used without affecting the
achievable rate. Note that the transmission power is allowed to vary from the
training phase to the data transmission phase. With our choice of
$\mathbf{\Phi}$, the received signal $\mathbf{Y}_p \in \mathbb{C}^{M\times
T_{\tau}}$ during the training phase can be written as
\begin{equation}
  \mathbf{Y}_p=\mathbf{H}\mathbf{\Phi} + \mathbf{N} =
  \sqrt{E}\mathbf{H}+\mathbf{N}
\end{equation}
where $\mathbf{N}\in \mathbb{C}^{M\times T_{\tau}}$ is the additive noise. The
equation describes $M\times T_{\tau}$ independent identities, one for each
channel coefficient. The (linear) MMSE estimate for the channel $\mathbf{H}$
is given by
\begin{equation}
  \hH = \frac{\sqrt E}{E+1}  \Yp = \frac{E}{E+1}\mathbf{H} + \frac{\sqrt
  E}{E+1}\mathbf{ N}.
\end{equation}
The channel estimation error is defined as
\begin{equation}
\tH=\mathbf{H}-\hH=\frac 1{E+1}\mathbf{H} - \frac{\sqrt E}{E+1}\mathbf{N}.
\end{equation}
It is well known and easy to verify that the elements of $\hH$ are \iid
complex Gaussian with zero mean and variance
\def\sigmah{\sigma_{\hH}^2}
\begin{equation}
  \sigmah = \frac{E}{E+1},
\end{equation}
and the elements of $\tH$ are \iid complex Gaussian with zero mean and
variance
\def\sigmat{\sigma_{\tH}^2}
\begin{equation}
  \sigmat = \frac{1}{E+1}.
\end{equation}
Moreover, $\hH$ and $\tH$ are in general uncorrelated as a property of linear
MMSE estimator, and in this case independent thanks to the Gaussian
assumptions.

\subsection{Equivalent Channel} \label{equichan}

Once the channel is estimated, the base station has $\hH$ and will decode the
users' information using $\hH$. We can write the received signal as
\begin{equation} \label{eq.equiv}
 \mathbf{ y}=\hH \mathbf{s} + \tH \mathbf{s} + \mathbf{n}  := \hH \mathbf{s} +
 \mathbf{v}
\end{equation}
where $\mathbf{v}\bydef \tH \mathbf{s} + \mathbf{n}$ is the new equivalent
noise containing actual noise $\mathbf{n}$ and self interference $\tH\mathbf{
s}$ caused by inaccurate channel estimation. Assuming that each element of
$\mathbf{s}$ has variance $\Pd$ during the data transmission phase, and there
is no cooperation among the users, the variance of each component of
$\mathbf{v}$ is
\def\sigmav{\sigma_v^2}
\begin{equation}\label{eq.varv}
\sigmav=\frac{K\Pd}{E+1}+1.
\end{equation}
If we replace $\mathbf{v}$ with a zero-mean complex Gaussian noise with equal
variance $\sigmav$, but independent of $\mathbf{s}$, then the system described
in \eqref{eq.equiv} can be viewed as MIMO system with perfect CSI at the
receiver, and equivalent signal to noise ratio (SNR)
\def\rhoe{\rho_{\text{eff}}}
\begin{equation}\label{eq.eff}
 \rhoe \bydef \frac {\Pd\sigmah}{\sigmav} = \frac{\Pd E}{K\Pd + E +1}
 =\frac{\Pd}{1+\frac{K\Pd+1}E}.
\end{equation}
The SNR is the signal power from a single transmitter per receive antenna
divided by the noise variance per receive antenna. It is a standard argument
that a noise equivalent to $\mathbf{v}$ but assumed independent of
$\mathbf{s}$ is ``worse'' (see e.g., \cite{haho03}). As a result, the derived
rate based on such assumption is achievable. In the following, for notational
brevity, we assume that $\mathbf{v}$ in \eqref{eq.equiv} is independent of
$\mathbf{s}$ without introducing a new symbol to represent the equivalent
\emph{independent} noise.

Note that the effective SNR $\rhoe$ is the actual SNR $\Pd$ divided by a loss
factor $1+(K\Pd+1)/E$. The loss factor can be made small if the energy $E$
used in the training phase is large.

\subsection{Energy Splitting Optimization} \label{sec.split}

The energy in the training phase can be optimized to maximize the effective
SNR $\rhoe$ in \eqref{eq.eff} for point-to-point MIMO system, as has been done
in \cite[Theorem~2]{haho03}. We adapt only the result below for our case
because it is relevant to our discussion. Importantly, with the effective SNR
adopted in this paper, the achievable rate with MRC, ZF and MMSE receiver can
be easily optimized in a closed form.

We assume the average transmitted power over one coherence interval $T$ is
equal to a given constant $\rho$, namely
\(
\rho_dT_d+\rho_{\tau}T_{\tau}=\rho T \label{eq.ener}
\).
Let $\alpha:= \rho_\tau T_\tau/(\rho T)$ denote the fraction of the total
transmit energy that is devoted to channel training; i.e.,
\begin{equation} \label{eq.optfra}
\rho_{\tau}T_{\tau}=\alpha\rho T, \quad \rho_dT_d=(1-\alpha)\rho T,
\quad 0\le \alpha\le 1.
\end{equation}

Define an auxiliary variable when $T_d\ne K$:
\begin{equation}
  \gamma\bydef \frac{K\rho T+T_d}{\rho T(T_d-K)}
\end{equation}
which is positive if $T_d>K$ and negative if $T_d<K$.

It can be easily verified that in all the three cases, namely $T_d=K$,
$T_d>K$, and $T_d<K$, $\rhoe$ is concave in $\alpha$ within $\alpha\in(0,1)$.
The optimal value for $\alpha$ that maximizes $\rhoe$ is given as follows:
\begin{equation}\label{eq.optalpha}
  \alpha^*=\begin{cases}
  -\gamma+\sqrt{\gamma(\gamma+1)}, & T_d>K \\
  \frac 12, & T_d=K \\
  -\gamma-\sqrt{\gamma(\gamma+1)}, & T_d<K
  \end{cases}
\end{equation}
The maximized effective SNR $\rhoe^*$ is given as
\begin{equation}
  \rhoe^*=\begin{cases}
\frac{\rho T}{T_d-K}(-2\sqrt{\gamma(\gamma+1)}+(1+2\gamma)), & T_d>K \\
\frac{(\rho T)^2}{4K(1+\rho T)}, & T_d=K \\
\frac{\rho T}{T_d-K}(2\sqrt{\gamma(\gamma+1)}+(1+2\gamma)), & T_d<K
\end{cases}
\end{equation}
At high SNR ($\rho\gg 1$), we have
\begin{equation}
\gamma\approx\frac{K}{T_d-K},
\end{equation}
and the optimal values are
\begin{equation}
  \alpha^*_{\textrm{H}}\approx\frac{\sqrt{K}}{\sqrt{T_d}+\sqrt{K}}, \quad
  \rhoe^*\approx\frac{T}{(\sqrt{T_d}+\sqrt{K})^2}\rho.
\end{equation}
At low SNR ($\rho\ll 1$), we have
\begin{equation}
\gamma\approx\frac{T_d}{\rho T(T_d-K)},
\end{equation}
and the optimal values are
\begin{equation}\label{eq.lowsnr}
  \alpha^*_{\textrm{L}}\approx\frac 12, \quad
  \rhoe^*\approx\frac{(\rho T)^2}{4T_d}.
\end{equation}

\section{Achievable rates}\label{achirate}

\subsection{Rates of Linear Receivers}

Given the channel model \eqref{eq.equiv}, linear processing can be applied to
$\mathbf{y}$ to recover $\mathbf{s}$, as in e.g., \cite{nglm13}. Let
$\mathbf{A}\in \C KxM$ denote the linear processing matrix. The processed
signal is
\def\hs{\widehat{\mathbf{s}}}
\begin{equation}
  \hs \bydef \mathbf{A}\mathbf{y} = \mathbf{A}\hH \mathbf{s} + \mathbf{A}\mathbf{v}.
\end{equation}
The MRC processing is obtained by setting $\mathbf{A}=\hH^\dag$. The ZF
processing is obtained by setting $\mathbf{A}=(\hH^\dag\hH)^{-1}\hH^\dag$. And
the MMSE processing is obtained by setting $\mathbf{A}=\hH^\dag (\hH^\dag \hH+
\frac{\sigma_v^2}{\Pd})^{-1}$, where $\sigma_v^2$ is as given in
\eqref{eq.varv}.

Based on the equivalent channel model, viewed as a multi-user MIMO systems
with perfect receiver CSI and equivalent SNR $\rhoe$, the achievable rates
lower bounds derived in \cite[Propositions~2 and~3]{nglm13} can then be
applied. Also, setting the training period equal to the total number of
transmit antennas possesses certain optimality as derived in \cite{haho03},
which means $T^*_{\tau}=K$. Specifically, for MRC the following ergodic sum
rate
 is achievable:
\begin{equation}\label{eq.ratemrc}
  R^\text{(MRC)} \bydef K\left (1 - \frac KT\right) \log_2\left(1 +
  \frac{\rhoe(M-1)}{\rhoe(K-1)+1} \right).
\end{equation}
For ZF, assuming $M>K$, the following sum rate is achievable:
\begin{equation} \label{eq.ratezf}
  R^\text{(ZF)} \bydef K\left (1 - \frac KT\right) \log_2\left(1 +
  \rhoe(M-K) \right).
\end{equation}
Note that the factor $(1-\frac KT)$ is due to the fact that during one
coherence interval of length $T$, $K$ time slots have been used for the
training purpose. The number of data transmission slots is $T-K$, and the
achieved rate needs to be averaged over $T$ channel uses. Also, these rates
are actually lower bounds on achievable rates (due to the usage of Jensen's
inequality).

For the MMSE processing, we assume that the noise $\mathbf{v}$ in
\eqref{eq.equiv} is independent of $\hH$ and Gaussian distributed. The total
received useful signal energy per receive antenna is $K\Pd E/(E+1)$. The noise
variance per receive antenna is $\sigma_v^2$ as in \eqref{eq.varv}. Therefore
the SNR per receive antenna is equal to
\def\snr{\textsf{SNR}}
\begin{equation}
  \snr=\frac{K\Pd E}{K\Pd + E + 1}.
\end{equation}
Using a recent result in \cite[Proposition~1]{mcct10}, we obtain an achievable
sum rate for MMSE processing after MMSE channel estimation, given by
\ifCLASSOPTIONtwocolumn
\begin{multline}
   R^\text{(MMSE)} \bydef K e^{K/\snr}
   \left( f(M, K, \frac K{\snr})-\right.\\
    \left.f(M, K-1, \frac K{\snr})\right) \log_2 e\label{eq.ratemmse}
\end{multline}
\else
\begin{equation}\label{eq.ratemmse}
   R^\text{(MMSE)} \bydef K e^{K/\snr}
   \left( f(M, K, \frac K{\snr})- f(M, K-1, \frac K{\snr})\right) \log_2 e
\end{equation}
\fi where the function $f()$ is defined as
\begin{equation} \label{eq.f}
  f(m,n,x)=\frac{\sum_{k=1}^n \det \mathbf{\Psi}_{n,m}(k,x)}{\Gamma_n (m) \Gamma_n(n)}.
\end{equation}
In \eqref{eq.f}, $\mathbf{\Psi}_{n,m}(k, x)$ is an $n\times n$ matrix with
$(s,t)$th entry $[\Phi_{n,m}(k, x)]_{s,t}$ given by
\begin{equation*}
  \frac{[\Phi_{n,m}(k, x)]_{s,t}}{(n+m-s-t)!}=
  \begin{cases}
    \sum_{h=1}^{(n+m-s-t)+1} E_h(x), & t=k\\
    1 & t\ne k
  \end{cases}
\end{equation*}
where $E_h(\cdot)$ is the exponential integral function; and
\begin{equation}
\Gamma_n(m)=\prod_{i=1}^n \Gamma(m-i+1)
\end{equation}
with $\Gamma(\cdot)$ being the Gamma function.

We summarize the results in the following theorem.
\begin{theorem}
With $K$ mobile users and $M$ antennas at the base station, and a channel
model as given in \eqref{eq.sys}, coherence interval length $T$, training
energy $E$, and data transmission power $\Pd$, the following rates are
achievable: 1) rate given by \eqref{eq.ratemrc} with MRC receiver; 2) rate
given by \eqref{eq.ratezf} with ZF receiver; and 3) rate given by
\eqref{eq.ratemmse} with MMSE receiver.
\end{theorem}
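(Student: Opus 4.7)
The plan is to build on the equivalent channel formulation of Section~\ref{equichan} and then invoke three existing achievable-rate results from the literature, adapted to our effective SNR. First, I would recall that the received signal can be rewritten as $\mathbf{y}=\hH\mathbf{s}+\mathbf{v}$ with $\mathbf{v}\bydef\tH\mathbf{s}+\mathbf{n}$, and that although $\mathbf{v}$ is neither Gaussian nor independent of $\mathbf{s}$, the classical worst-case argument (see \cite{haho03}) shows that replacing $\mathbf{v}$ by a zero-mean circularly symmetric Gaussian vector of the same covariance $\sigmav\mathbf{I}_M$, independent of $\mathbf{s}$, yields a rate that is a \emph{lower bound} on what is actually achievable. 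This reduction is what makes it legitimate to treat the system as a standard multi-user MIMO channel with perfect receiver CSI $\hH$ and equivalent per-antenna SNR $\rhoe=\Pd\sigmah/\sigmav$ given in \eqref{eq.eff}.

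Next, I would account for the training overhead: within each coherence block of $T$ symbols, $T_\tau=K$ symbols are spent on orthogonal pilots (the choice $T_\tau=K$ is justified by the optimality result of \cite{haho03} mentioned after \eqref{eq.eff}), so the data-phase rate must be scaled by $(1-K/T)$. With this factor in hand, parts (1) and (2) of the theorem follow by direct substitution of $\rhoe$ into the per-user achievable expressions of \cite[Propositions~2 and~3]{nglm13}, which give $\log_2(1+\rhoe(M-1)/(\rhoe(K-1)+1))$ for MRC and $\log_2(1+\rhoe(M-K))$ for ZF, respectively. Summing over the $K$ users and multiplying by $(1-K/T)$ produces \eqref{eq.ratemrc} and \eqref{eq.ratezf}. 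The inner Jensen step in \cite{nglm13} is what makes these lower bounds rather than exact ergodic rates, a point I would mention explicitly.

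For part (3), the argument is slightly different because the MMSE receiver output cannot be written cleanly in terms of a single scalar $\rhoe$; instead, the per-branch SNR before combining is $\snr=K\Pd E/(K\Pd+E+1)$. I would plug this value of $\snr$ into \cite[Proposition~1]{mcct10}, which expresses the ergodic achievable sum rate of MMSE reception over an i.i.d.\ Gaussian MIMO channel in closed form through the function $f(m,n,x)$ in \eqref{eq.f} that aggregates determinants of matrices whose entries are exponential integrals. After multiplying by $(1-K/T)$ for the training overhead, the expression \eqref{eq.ratemmse} drops out verbatim.

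The main obstacle, in my view, is not algebraic but conceptual: one must argue carefully that the Gaussian-independent replacement of $\mathbf{v}$ is permitted uniformly for all three receivers, and that independence of $\hH$ and $\tH$ (which holds here by the joint Gaussianity established in Section~\ref{mmsechannelesti}) is enough to invoke the cited lemmas, since those references implicitly assume the residual noise is independent of the estimated channel. Once this is granted, the remaining work is bookkeeping: checking that the per-user symbol power used in \cite{nglm13,mcct10} matches our $\Pd$ rather than $\rho$, and that the SNR substitutions $\rhoe$ and $\snr$ are consistent with the covariance of $\mathbf{v}$. No new inequality or construction appears to be required beyond those already in place.
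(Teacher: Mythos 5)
Your argument is essentially identical to the paper's own justification: reduce to the equivalent channel \eqref{eq.equiv} with the worst-case Gaussian, signal-independent replacement of $\mathbf{v}$ and effective SNR $\rhoe$, apply the training-overhead factor with $T_\tau=K$, and then invoke \cite[Propositions~2 and~3]{nglm13} for MRC/ZF and \cite[Proposition~1]{mcct10} with $\snr=K\Pd E/(K\Pd+E+1)$ for MMSE. The only quibble is that \eqref{eq.ratemmse} as printed does not actually carry the $(1-K/T)$ prefactor you describe for the MMSE case, although including it is the convention consistent with \eqref{eq.ratemrc} and \eqref{eq.ratezf}.
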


As a numerical example, we depict in \figref{fig1} the rates that can be
achieved for a test system for optimized $\alpha$ as in \eqref{eq.optalpha},
as well as the unoptimized $\alpha=(T-K)/T$. It can be observed that the
achievable rate with optimized $\alpha$ is higher than the unoptimized one.
When SNR is large, the slops of the achievable with ZF and MMSE are the same
whatever energy slitting optimization is applied or not. We will analyze the
DoF in the next section.

\begin{figure}[!htp]
\centering
\includegraphics[width=\figwidth]{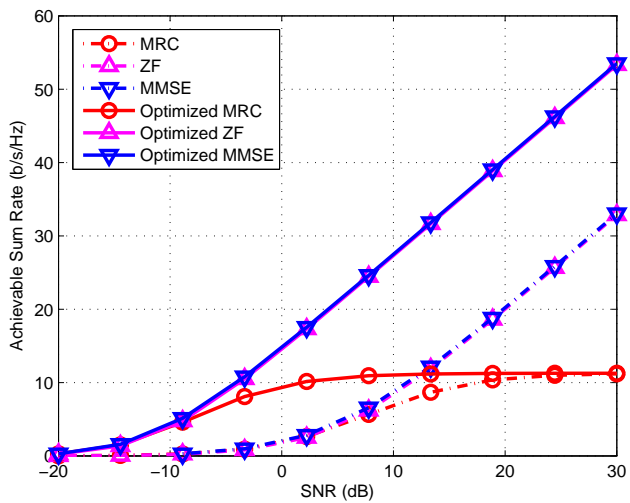}
\caption{Achievable sum rate of MRC, ZF and MMSE receivers, where $M=20$,
$K=4$, $T=196$. Solid lines indicate the sum rate with optimized $\alpha$ and
dashed lines illustrate the results with allocating the same power for both
training and data phases.}
\label{fig1}
\end{figure}

\subsection{Degrees of Freedom}

We define the DoF of the system as
\begin{equation}
  d(M, K, T)\bydef \sup \lim_{\rho\to\infty} \frac{R^{(\text{total})}(\rho)}{\log_2 (\rho)}
\end{equation}
where the supremum is taken over the totality of all reliable communication
schemes for the system, and $R^{(\text{total})}$ denotes the sum rate of the
$K$ users under the power constraint $\rho$. We may also speak of the
(achieved) degree of freedom of one user for a particular achievability
scheme, which is the achieved rate of the user normalized by $\log_2(\rho)$ in
the limit of $\rho\to \infty$. The DoF measures the multiplexing gain offered
by the system when compared to a reference point-to-point single-antenna
communication link, in the high SNR regime (see e.g., \cite{jafa11}).

\begin{theorem} \label{th.dof}
For an $(M, K, T)$ MIMO uplink system with $M$ receive antennas, $K$ users,
and coherence interval $T$, the total DoF of the system is
\begin{equation}
  d(M,K,T)=K^{\dag}\left(1-\frac {K^{\dag}}{T}\right).
\end{equation}
\end{theorem}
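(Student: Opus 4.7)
The plan is to prove Theorem~\ref{th.dof} by splitting the claim into a converse (upper bound) and an achievability (lower bound), under the natural reading $K^\dagger \bydef \min\{K, M, \lfloor T/2 \rfloor\}$ borrowed from the non-coherent MIMO literature.

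\textbf{Converse.} First I would let the $K$ single-antenna users cooperate; doing so cannot reduce capacity and converts the system into a point-to-point non-coherent block-fading MIMO channel with $K$ transmit antennas, $M$ receive antennas, and coherence interval $T$. The DoF of that channel was characterized in \cite{zhts02} and equals $\min(K, M, \lfloor T/2 \rfloor)\bigl(1 - \min(K, M, \lfloor T/2 \rfloor)/T\bigr)$, which is exactly $K^\dagger(1 - K^\dagger/T)$. This immediately gives $d(M, K, T) \le K^\dagger(1 - K^\dagger/T)$.

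\textbf{Achievability.} I would activate exactly $K^\dagger$ of the users and silence the remaining $K - K^\dagger$. The active users execute the two-phase scheme of Section~\ref{channelesti} with training length $T_\tau = K^\dagger$ and the energy split $\alpha^*$ from \eqref{eq.optalpha}; since $K^\dagger \le \lfloor T/2 \rfloor < T$ and $K^\dagger \le M$, the scheme is well defined. The high-SNR expansion following \eqref{eq.optalpha} shows that $\rhoe^*$ grows linearly with $\rho$ as $\rho\to\infty$. When $K^\dagger < M$, applying \eqref{eq.ratezf} with $K$ replaced by $K^\dagger$ gives an achievable sum rate $K^\dagger(1 - K^\dagger/T)\log_2\bigl(1 + \rhoe^*(M - K^\dagger)\bigr)$, whose pre-log matches the target. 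When $K^\dagger = M$, the ZF expression vanishes and I would switch to joint (non-linear) decoding on the post-training equivalent channel \eqref{eq.equiv} of Section~\ref{equichan}: the independent-Gaussian-noise reduction recorded there makes the residual model an $M$-user, $M$-antenna coherent MAC whose ergodic sum capacity scales as $M\log_2\rhoe^*$, so that after multiplying by the training pre-factor $(1 - M/T)$ the DoF $M(1 - M/T) = K^\dagger(1 - K^\dagger/T)$ is attained.

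\textbf{Main obstacle.} The delicate step is the sub-case $K^\dagger = M$: none of the three linear receivers of Theorem~1 achieves the full DoF in this regime (MRC saturates at $\log_2 2$ per user, and ZF collapses because of the $M - K^\dagger$ factor), so one is forced to use a non-linear decoder. Making this rigorous requires arguing that joint decoding on the equivalent channel attains the point-to-point non-coherent sum-capacity DoF despite the MMSE channel estimate being imperfect, and this in turn leans crucially on the Gaussian worst-case noise reduction of Section~\ref{equichan} applied to $\mathbf{v} = \tH\mathbf{s} + \mathbf{n}$.
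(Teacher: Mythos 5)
Your proposal is correct and follows essentially the same route as the paper: the converse via transmitter cooperation and the non-coherent point-to-point DoF of \cite{zhts02}, and achievability by silencing all but $K^\dagger$ users, using the two-phase training scheme with a ZF receiver when $K^\dagger<M$, and falling back to joint (non-linear) MAC decoding on the equivalent coherent channel when $K^\dagger=M$. The only cosmetic difference is that you invoke the optimized split $\alpha^*$ from \eqref{eq.optalpha} where the paper simply takes $E=K^\dagger\rho$ and $\Pd=\rho$; both choices make $\rhoe$ scale linearly in $\rho$, so the DoF conclusion is unaffected.
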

where $K^{\dag}\bydef \min(M, K, \lfloor T/2\rfloor)$. \QED

\begin{proof}
To prove the converse, we observe that if we allow the $K$ transmitters to
cooperate, then the system is a point-to-point MIMO system with $K$ transmit
antennas, $M$ receive antennas, and with no CSI at the receiver. The DoF of
this channel has been quantified in \cite{zhts02}, in the same form as in the
theorem. Without cooperation, the users can at most achieve a rate as high as
in the cooperation case.

To prove the achievability, we first look at the case $K^{\dag}<M$. In this
case, we note that if we allow only $K^{\dag}$ users to transmit, and let the
remaining users be silent, then using the achievability scheme describe in
\secref{channelesti}, each of the $K^{\dag}$ users can achieve a rate per user
using the zero-forcing receiver given as follows (cf.~\eqref{eq.ratezf})
\begin{equation}
\left (1 - \frac {K^{\dag}}T\right) \log_2\left(1 +
  \rhoe(M-K^{\dag}) \right).
\end{equation}
Note that the condition $K^{\dag}<M$ is needed. If we choose $E=K^{\dag}\rho$
and $\Pd=\rho$, then the effective SNR in \eqref{eq.eff} becomes
\begin{equation} \label{eq.highsnr}
\rhoe=\frac{\rho}{1+\frac{K\rho+1}{K\rho}}.
\end{equation}
It can be seen that as $\rho\to\infty$, $\log(\rhoe)/\log(\rho)\to 1$ and a
DoF per user of $(1-K^{\dag}/T)$ is achieved. The total achieved DoF is
therefore $K^{\dag}(1-K^{\dag}/T)$. Although better energy splitting is
possible, as in \secref{sec.split}, it will not improve the DoF.

When $K^{\dag}=M$, the case is more subtle. In this case the zero-forcing
receive is no longer sufficient. In fact, even the optimal linear processing,
which is the MMSE receiver \cite[eq.~(31)]{nglm13}, is not sufficient. The
insufficiency can be established by using the results in
\cite[Sec.IV.C]{gasc98} to show that as $\rho\to\infty$, the effective SNR at
the output of MMSE receiver has a limit distribution that is independent of
SNR. We skip the details here, since it is not the main concern in this paper.

Instead, we notice that the equivalent channel \eqref{eq.equiv} has SNR given
by \eqref{eq.highsnr}, which for $K\rho>1$ is greater than $\rho/3$. So, the
MIMO system can be viewed as a multiple access channel (MAC) with $K^{\dag}$
single-antenna transmitters, and one receiver with $M$ receive antennas.
Perfect CSI is known at the receiver, and the SNR between $\rho/3$ and $\rho$.
Using the MAC capacity region result \cite[Theorem~14.3.1]{coth91},
\cite[Sec.~10.2.1]{tse05}, it can be shown that a total DoF of $K^{\dag}$ can
be achieved over $T-K^{\dag}$ the time slots.
\end{proof}

\remark The DoF is the same as that of a point-to-point MIMO channel with $K$
transmit antennas and $M$ receive antennas without transmit- or receive-side
CSI \cite{zhts02}. This is a bit surprising because optimal signaling over
non-coherent MIMO channel generally requires cooperation among the transmit
antennas. It turns out that as far as DoF is concerned, transmit antenna
cooperation is not necessary. This is the new twist compared to the
point-to-point case.

\remark It can be seen from the achievability proof that for $M>K$, which is
generally applicable for massive MIMO systems, ZF at the base station is
sufficient for achieving the optimal DoF. However, MRC is not sufficient
because $\rho$ shows up both in the numerator and denominator of
\eqref{eq.ratemrc}. So as $\rho\to\infty$, the achieved rate is limited. This
is due to the interference among the users.

\remark For the case $K^{\dag}=M$, non-linear decoding such as successive
interference cancellation is needed.

\remark When $T$ is large, a per-user DoF close to 1 is achievable, as long as
$K\le M$.

\remark When $M$ is larger than $K^{\dag}$, increasing $M$ further has no
effect on the DoF. However, it is clear that more receive antennas is useful
because more energy is collected by additional antennas. We will discuss the
benefit of energy savings in the next section.

\subsection{Discussion} \label{discussion}

\subsubsection{Power Savings for Fixed Rate}\label{powersave}

As more antennas are added to the base station, more energy can be collected.
Therefore, it is possible that less energy is needed to be transmitted from
the mobile stations. When there is perfect CSI at the base station, it has
been shown in \cite{nglm13} that the transmission power can be reduced by a
factor $1/M$ to maintain the same rate, compared to a single-user
single-antenna system.

When there is no CSI at the receiver, however, it was observed in
\cite{nglm13} that the power savings factor is $1/\sqrt{M}$ instead of $1/M$.
In the following we do a slightly finer analysis of the effected power savings
when $M$ is large, assuming the training phase has been optimized as in
\secref{channelesti}.

Consider $M\gg K> 1$. Because the received power is linearly proportional to
$M$, the transmitted power can be smaller when $M$ is larger. When $M\gg 1$,
the system is operating in power-limited regime. It can be seen from
\eqref{eq.ratemrc} and \eqref{eq.ratezf} that when $\rho$ is small, MRC
performs better than ZF, which has been previously observed, e.g.,
\cite{nglm13}. On the other hand, in the low-SNR regime the difference between
them is a constant factor $(M-1)/(M-K)$ in the SNR term within the logarithmic
functions in \eqref{eq.ratemrc} and \eqref{eq.ratezf}. The difference becomes
negligible when $M$ is large. Using either result, and the effective SNR in
\eqref{eq.lowsnr}, we are able to obtain the following.
\begin{corollary}
If we fix the per-user rate at $R=(1-K/T)\log_2(1+\rho_0)$, then the required
power $\rho$ is
\begin{equation}\label{eq.saving}
  \rho=\sqrt{\frac{4\rho_0 (T-K)}{MT^2}} + o\left(\frac1{\sqrt M} \right)
\end{equation}
\end{corollary}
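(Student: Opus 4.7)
The plan is to show that as $M\to\infty$ the system is automatically driven into the low-SNR regime of \eqref{eq.lowsnr}, and then to invert the relationship between $\rho$ and $\rhoe^*$ in that regime. I would carry out the argument with the ZF rate in \eqref{eq.ratezf}; the MRC rate in \eqref{eq.ratemrc} gives the same leading-order conclusion because in this regime $\rhoe^*(K-1)\to 0$, so the MRC SINR reduces to $\rhoe^*(M-1)$, differing from $\rhoe^*(M-K)$ only at subleading order.

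First, I would equate the per-user ZF rate $(1-K/T)\log_2(1+\rhoe^*(M-K))$ to the target per-user rate $(1-K/T)\log_2(1+\rho_0)$. Matching the arguments of the logarithms immediately gives $\rhoe^*=\rho_0/(M-K)$. Thus the optimized training must deliver an effective SNR that decays like $1/M$, which in particular forces $\rhoe^*\to 0$ as $M\to\infty$.

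Second, I would invert the low-SNR expansion \eqref{eq.lowsnr}, namely $\rhoe^*=(\rho T)^2/(4T_d)+O(\rho^3)$, to solve for $\rho$. Substituting the required value of $\rhoe^*$ gives
\[
\rho^2=\frac{4T_d\rho_0}{(M-K)T^2}\bigl(1+o(1)\bigr)=\frac{4(T-K)\rho_0}{(M-K)T^2}\bigl(1+o(1)\bigr).
\]
Taking square roots and expanding $(M-K)^{-1/2}=M^{-1/2}(1+O(1/M))$ recovers the announced expression $\rho=\sqrt{4\rho_0(T-K)/(MT^2)}+o(1/\sqrt M)$. A posteriori $\rho=O(1/\sqrt M)\to 0$, so the low-SNR expansion used was indeed valid.

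The only genuinely delicate point, which I expect to be the main obstacle, is the bookkeeping of error terms. I would need to confirm that the next correction in the expansion of $\rhoe^*$ about $\rho=0$ is strictly cubic (not quadratic), so that the induced correction to $\rho^2$ is $O(\rho)=O(1/\sqrt M)$ and therefore after the square root becomes $O(1/M^{3/4})=o(1/\sqrt M)$. This cubic order follows from the expansion $-2\sqrt{\gamma(\gamma+1)}+(1+2\gamma)=1/(4\gamma)+O(1/\gamma^2)$ as $\gamma\to\infty$, applied with $\gamma\sim T_d/(\rho T(T_d-K))$; combined with the prefactor $\rho T/(T_d-K)$ one recovers the stated cubic-remainder expansion for $\rhoe^*$. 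Once this bookkeeping is in place, the corollary follows by direct substitution.
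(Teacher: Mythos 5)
Your proposal is correct and follows essentially the same route as the paper's own (one-line) proof: equate the ZF per-user rate to the target so that $\rho_{\text{eff}}(M-K)=\rho_0$, invert the low-SNR expansion $\rho_{\text{eff}}^*\approx(\rho T)^2/(4T_d)$ from \eqref{eq.lowsnr} with $T_d=T-K$, and note that ZF is the worst of the three receivers at low SNR so the bound carries over to MRC and MMSE. Your error bookkeeping is a bit loose---the cubic remainder in $\rho_{\text{eff}}^*$ is a \emph{relative} correction of $O(\rho)$, which after the square root leaves a residual of $O(1/M)$ in $\rho$ rather than your $O(M^{-3/4})$---but both are $o(1/\sqrt M)$, so the conclusion stands.
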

\begin{proof}
This can be proved by setting $\rho M=\rho_0$ in the rate expression for ZF.
Since the achievable rate with ZF processing is worse than MRC and MMSE when
SNR is very low, the result is still applied for MRC and MMSE processing.
\end{proof}

It is interesting to note that increasing $T$ has a similar effect as
increasing $M$ on the required transmission power, reducing the power by
$1/\sqrt{M}$ or $1/\sqrt{T}$. The reason is the if $T$ is increased, then the
energy that can be expended on training is increased, improving the quality of
channel estimation. On the other hand, for \eqref{eq.saving} to be applicable,
we need $M\gg K$.

\subsubsection{MMSE and Optimal Processing}

If MMSE processing is used at the base station, then the performance can be
improved compared to MRC and ZF. However, at low SNR, MRC is near optimal and
at high SNR, ZF is near optimal. So MMSE processing will not change the nature
of the results that we have obtained, although a slightly higher rate is
possible. Also, it is observed that the difference between ZF and MMSE is
negligible for a wide range of SNR as shown in Figure~\ref{fig1} and the
similar results can be found at \cite{nglm13}.

%It is also possible to analyze the achievable rate with optimal
%non-linear processing, using known MAC capacity region results.

\subsubsection{Large Scale Fading} When large scale fading is considered, the
channel matrix becomes $\mathbf{G}=\mathbf{H}\mathbf{P}$. The matrix
$\mathbf{P}=\textrm{diag}\{\sqrt{p_1},\ldots,\sqrt{p_K}\}\in\mathbb{R}^{K\times
K}$ is diagonal where each entry models the path loss and shadow between the
base station and the $k$th user. The MMSE estimate of channel is given by
$\widehat{\mathbf{G}}=\frac{1}{\sqrt{E}}\mathbf{Y}_p(\frac{\mathbf{P}^{-1}}{E}+\mathbf{I})^{-1}$,
where the $k$th column of $\widehat{\mathbf{G}}$ is
\begin{equation}
\widehat{\mathbf{G}}_k=\frac{p^{\frac{3}{2}}_{k}E}{p_kE+1}\mathbf{h}_k+\frac{p_k\sqrt{E}}{p_kE+1}\mathbf{n}_k
\end{equation}
where $\mathbf{h}_k$ and $\mathbf{n}_k$ are the $k$th column of $\mathbf{H}$
and $\mathbf{N}$. With the definition of channel estimate error
$\widetilde{\mathbf{G}}=\mathbf{G}-\widehat{\mathbf{G}}$, we have the $k$th
column of $\widetilde{\mathbf{G}}$, i.e.,
\begin{equation}
\widetilde{\mathbf{G}}_k=\mathbf{G}_k-\widehat{\mathbf{G}}_k=\frac{p_k^{\frac{1}{2}}}{p_kE+1}\mathbf{h}_k-\frac{p_k\sqrt{E}}{p_kE+1}\mathbf{n}_k
\end{equation}
where $\mathbf{G}_k$ denotes the $k$th column of $\mathbf{G}$. Similar as in
\secref{mmsechannelesti}, we know that the elements of $\widehat{\mathbf{G}}$
and $\widetilde{\mathbf{G}}$ are independent complex Gaussian with zero mean.
Also, we can get the variances of each element of $\widehat{\mathbf{G}}_k$ and
$\widetilde{\mathbf{G}}_k$ are
\begin{equation}
\sigma^2_{\widehat{\mathbf{G}}_k}=\frac{p^2_kE}{p_kE+1}\quad\textrm{and}\quad\sigma^2_{\widetilde{\mathbf{G}}_k}=\frac{p_k}{p_kE+1}.
\end{equation}
Based on the definition of the equivalent channel in \secref{equichan}, we can
obtain the equivalent noise
\begin{equation}
 \sigma^2_{v}=\sum^K_{i=1}\frac{\rho_dp_i}{p_iE+1}+1
\end{equation}
and effective SNR of the $k$th user
\def\rhoek{\rho_{\text{eff},k}}
\begin{align}
 \rhoek\bydef&\frac {\Pd\sigma^2_{\widehat{\mathbf{G}}_k}}{\sigmav}=\frac{\rho_dp^2_kE}{(p_kE+1)(\sum^K_{i=1}\frac{\rho_dp_i}{p_iE+1}+1)}
 \nonumber\\
=&\frac{p^2_k\rho_d\rho_{\tau}T_{\tau}}{(p_k\rho_{\tau}T_{\tau}+1)(\sum^K_{i=1}\frac{\rho_dp_i}{p_i\rho_{\tau}T_{\tau}+1}+1)}.
\end{align}

Then, the corresponding achievable rates can be derived. Specifically, for MRC
the following ergodic rate of the $k$th user is achievable:
\def\rhoei{\rho_{\text{eff},i}}
\begin{align}\label{eq.ratemrck}
&R^\text{(MRC)}_k  \bydef\left (1 - \frac KT\right) \log_2\left(1 +
  \frac{\rhoek(M-1)}{\sum^M_{i=1,i\ne k}\rhoei+1} \right)
\nonumber\\
&=\left (1 - \frac KT\right) \log_2\bigg(1 +
\nonumber\\
&\quad\frac{p^2_k\rho_d\rho_{\tau}T_{\tau}(M-1)}{\rho_d(p_k\rho_{\tau}T_{\tau}+1)\sum^K_{i=1,i\ne k}p_k+p_k(\rho_d+\rho_{\tau}T_{\tau})+1}\bigg).
\end{align}
For ZF, assuming $M>K$, the following rate of the $k$th user is achievable:
\begin{align} \label{eq.ratezfk}
R^\text{(ZF)}_k \bydef &\left (1 - \frac KT\right) \log_2\left(1 +
  \rhoek(M-K) \right).
\end{align}

\remark The DoF is not changed when large scale fading is considered. Since we
can still choose $E=K^{\dag}\rho$ and $\Pd=\rho$, then the effective SNR in
\eqref{eq.ratezfk} becomes
\begin{equation} \label{eq.highsnrk}
\rhoek=\frac{\rho\frac{Kp^2_k}{Kp_k+1/\rho}}{\sum^K_{i=1}\frac{p_i}{Kp_i+1/\rho}+1}.
\end{equation}
Similar as in \thref{th.dof}, when $\rho\to\infty$,
$\log(\rhoek)/\log(\rho)\to 1$ and a DoF per user of $(1-K^{\dag}/T)$ is still
achieved.

\remark The convexity of $\rhoek$ in terms of $\alpha$ is not easy to prove,
but with chosen some special parameters, we can know $\rhoek$ is not concave.
But it can be proved as quasi-concave. Therefore, the achievable rate can be
improved with optimizing $\alpha$ but without global optimal guarantee.
However, for the MRC case, the sum achievable rate
$\sum^K_{k=1}R^\text{(MRC)}_k$ is concave in terms of $\alpha$ which can be
proved easily by \leref{le.1} in the \secref{sec.mrc}. Hence, the optimal
$\alpha$ can be also given in both case of average power constraint and peak
power constraint.

\remark In a practical system, the channel statistic information is provided
from downlink, and adaptive power control mechanism can be adopted for the
block fading channel. Since most of the effect of large scale fading can be
compensated \cite{gold05}, the power allocation for large scale fading within
one coherence interval is not the main issue for the single cell case.

\section{Joint Optimization of Energy Allocation and Training Duration}
\label{durapeak}

If the peak power, rather than the average power, is limited, then our DoF
result still holds because the achievability proof actually uses equal power
in the training and data transmission phases. The power savings discussion in
the previous subsection still applies, because the system is limited by the
total amount of energy available, and not how the energy is expended. In the
regime where the SNR is neither very high or very low, the peak power
constraint will affect the rate. Also, there is a peak power limit for
hardware implementation in practical. We provide a detailed analysis in this
section.

\subsection{Energy Allocation} We assume that the transmitters are subject to
both peak and average power constraints, where the peak power during the
transmission is assumed to be no more than $\rhomax$; i.e.,
\begin{equation}\label{eq.oricon1}
 0\le\rho_d, \rho_{\tau}\le \rhomax.
\end{equation}

\subsection{The Optimization Problem}

\def\objfunc#1{R^{#1}(\alpha,T_d)}
For an adopted receiver, $\mathcal{A}\in \{\text{MRC}, \text{ZF}\}$, our goal
is to maximize the uplink achievable rate subject to the peak and average
power constraints. Based on the model in \eqref{eq.equiv}, we will consider
two linear demodulation schemes: MRC and ZF receivers.

For MRC receiver, the received SNR for any of the $K$ users' symbols can be
obtained by substituting $\rhoe$ into $\rhoe(M-1)/(\rhoe(K-1)+1)$ (see
\cite[eq.~(39)]{nglm13}):
\begin{equation}\label{eq.mrcsinr}
\textsf{SNR}^{\text{(MRC)}}=
  \frac{T_{\tau}\rho_{\tau}\rho_d(M-1)}
    {T_{\tau}\rho_{\tau}\rho_d(K-1)+K\rho_d+T_{\tau}\rho_{\tau}+1}.
\end{equation}

For the ZF Receiver, the received SNR for any of the $K$ users' symbols can be
obtained by substituting $\rhoe$ into $\rhoe(M-K)$ (see
\cite[eq.~(42)]{nglm13}):
\begin{equation}\label{eq.zfrate}
\textsf{SNR}^{\text{(ZF)}}=
  \frac{T_{\tau}\rho_{\tau}\rho_d(M-K)}{K\rho_d+T_{\tau}\rho_{\tau}+1}.
\end{equation}

For either receiver, a lower bound on the sum rate achieved by the $K$ users
is given by
\begin{equation}\label{eq.R}
R^{(\mathcal{A})}(\alpha,T_d)=
  \frac{T_d}{T}K\log_2(1+\textsf{SNR}^{(\mathcal{A})})
\end{equation}
where $\mathcal{A}\in \{\text{MRC}, \text{ZF}\}$.

Our optimization problem can be formulated as follows:
\begin{align}\label{eq.oripro}
(\textbf{OP}) \qquad \maximize_{\alpha,T_d}\quad&\objfunc{(\mathcal A)} \\
\text{subject to}\quad & T_d+T_{\tau}=T \\
 & \rho T\alpha+\rhomax T_d  \le\rhomax T \label{eq.alcon1} \\
  -&\rho T\alpha-\rhomax T_d  \le-\rho T \label{eq.alcon2}  \\
 & 0\le \alpha  \le 1 \label{eq.alconalpha} \\
 & 0<T_d  \le T-K  \label{eq.tdconstr}
\end{align}
where $\objfunc{(\mathcal A)}$ is as given in \eqref{eq.R}; \eqref{eq.alcon1}
and \eqref{eq.alcon2} are from the peak power constraints in the training and
data phases, respectively; and the last constraint is from the requirement
that $T_\tau \ge K$.

\section{SNR Maximization when $T_d$ is Fixed} \label{sec.sola}

The feasible set of the problem (OP) is convex, but the convexity of the
objective function is not obvious. In this section, we consider the
optimization problem when $T_d$ is fixed. In this case, we will prove that
$R^{(\mathcal{A})}(\alpha,T_d)$ is concave in $\alpha$, and derive the
optimized $\alpha$. The result will be useful in the next section where
$\alpha$ and $T_d$ are jointly optimized.

For a fixed $T_d$, from the peak power constraints \eqref{eq.alcon1} and
\eqref{eq.alcon2}, we have
\begin{equation}\label{eq.solcon}
\frac{\rhomax T_{\tau}}{\rho T}+\left(1-\frac{\rhomax}{\rho}\right)\le\alpha\le\frac{\rhomax T_{\tau}}{\rho T}.
\end{equation}
Combined with \eqref{eq.alconalpha}, the overall constraints on $\alpha$ are
\begin{equation}\label{eq.solcon2}
\min\{0,\frac{\rhomax T_{\tau}}{\rho T}+\left(1-\frac{\rhomax}{\rho}\right)\}\le\alpha\le\max\{\frac{\rhomax T_{\tau}}{\rho T},1\}.
\end{equation}
In the remaining part of this section, we will first ignore the peak power
constraint, and derive the optimal $\alpha\in(0,1)$ for a given $T_d$. At the
end of this section, we will reconsider the effect of the peak power
constraint on the optimal $\alpha$.

\subsection{MRC Case without Peak Power Constraint} \label{sec.mrc}

Using \eqref{eq.optfra} we can rewrite \eqref{eq.mrcsinr} as
\begin{equation}\label{eq.newmrcsinr}
\textsf{SNR}^\text{(MRC)}(\alpha)=\frac{M-1}{K-1}
  \frac{\alpha(\alpha-1)}{\alpha^2-a_1\alpha-b_1}
\end{equation}
where
\begin{equation}\label{eq.defab}
a_1=1+\frac{T_d-K}{\rho T(K-1)},\quad b_1=\frac{\rho T K+T_d}{\rho^2T^2(K-1)}>0.
\end{equation}
It can be verified that $1-a_1-b_1\le0$.

\subsubsection{Behavior of the $\textsf{SNR}^\text{(MRC)}(\alpha)$ Function}

Define
\begin{equation}
g(\alpha):=\textsf{SNR}^\text{(MRC)}\cdot(K-1)/(M-1).
\end{equation}
And let $g_d(\alpha)=\alpha^2-a_1 \alpha -b_1$, which is the denominator of
$g(\alpha)$.

\begin{lemma}\label{le.1}
The function $g(\alpha)$ is concave in $\alpha$ over $(0,1)$ when
$1-a_1-b_1\le0$ and $b_1>0$.
\end{lemma}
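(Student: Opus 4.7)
The plan is to recast $g(\alpha)$ in a separable form via partial fractions of its reciprocal, so that concavity reduces to a short algebraic identity rather than a brute-force computation of the second derivative. First I would observe that the denominator $g_d(\alpha) := \alpha^2 - a_1\alpha - b_1$ satisfies $g_d(0) = -b_1 < 0$ and $g_d(1) = 1 - a_1 - b_1 \le 0$; since $g_d$ is a convex (upward-opening) quadratic, it is strictly negative on $(0,1)$, so $-g_d(\alpha) > 0$ there and $g(\alpha) = \alpha(1-\alpha)/(-g_d(\alpha))$ is positive and well-defined on that interval.

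The key step is to perform polynomial long division of $-g_d(\alpha) = b_1 + a_1\alpha - \alpha^2$ by $\alpha(1-\alpha)$ and then apply partial fractions to the remainder. This yields
\begin{equation*}
\frac{1}{g(\alpha)} = 1 + \frac{b_1}{\alpha} + \frac{a_1 + b_1 - 1}{1-\alpha} =: \phi(\alpha).
\end{equation*}
The two hypotheses feed in exactly here: $A := b_1 > 0$ and $B := a_1 + b_1 - 1 \ge 0$ are both nonnegative, so $\phi > 0$ on $(0,1)$ and $g = 1/\phi$.

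To finish, I would apply the formula $g'' = (2\phi'^2 - \phi''\phi)/\phi^3$. Differentiating gives $\phi'(\alpha) = -A/\alpha^2 + B/(1-\alpha)^2$ and $\phi''(\alpha) = 2A/\alpha^3 + 2B/(1-\alpha)^3$. When $2\phi'^2 - \phi''\phi$ is expanded, the pure $A^2$ and $B^2$ terms cancel exactly, and the three $AB$ cross-terms collapse through the identity $(1-\alpha)^2 + 2\alpha(1-\alpha) + \alpha^2 = 1$ into a single negative term, leaving
\begin{equation*}
2\phi'^2 - \phi''\phi = -\frac{2A}{\alpha^3} - \frac{2B}{(1-\alpha)^3} - \frac{2AB}{\alpha^3(1-\alpha)^3},
\end{equation*}
which is manifestly non-positive and in fact strictly negative since $A > 0$. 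Dividing by $\phi^3 > 0$ then yields $g'' < 0$ on $(0,1)$.

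The main obstacle is purely algebraic: a direct computation of $g''$ as a rational function in $\alpha$ produces a cubic in the numerator whose sign as a function of $a_1, b_1$ is not transparent, and the hypothesis $1 - a_1 - b_1 \le 0$ enters only implicitly. The partial-fraction representation is the trick that makes the hypothesis do its work in a single line (namely, $B \ge 0$), and reduces the remaining calculation to a clean cancellation via $((1-\alpha)+\alpha)^2 = 1$.
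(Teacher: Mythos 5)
Your proof is correct, and it takes a genuinely different route from the paper's. The paper differentiates $g$ twice directly, reduces the sign of $g''$ to the positivity of a cubic $f(x)=(a-1)x^3+3bx^2-3bx+ab+b^2$ on $(0,1)$, and then establishes $f>0$ by checking $f(0)$, $f(1)$, locating the interior minimum from $f'$, and running a case analysis on the sign of $a-1$ (plus the degenerate case $a=1$). Your partial-fraction identity
\[
\frac{1}{g(\alpha)}=1+\frac{b_1}{\alpha}+\frac{a_1+b_1-1}{1-\alpha}=:\phi(\alpha)
\]
is easily verified (the numerator of the right-hand side over $\alpha(1-\alpha)$ is exactly $b_1+a_1\alpha-\alpha^2=-g_d(\alpha)$), and it makes the two hypotheses appear precisely as the nonnegativity of the coefficients $A=b_1>0$ and $B=a_1+b_1-1\ge 0$. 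The cancellation in $2\phi'^2-\phi''\phi$ checks out: the $A^2/\alpha^4$ and $B^2/(1-\alpha)^4$ terms cancel, the three cross-terms combine via $(1-\alpha)^2+2\alpha(1-\alpha)+\alpha^2=1$ into $-2AB/\alpha^3(1-\alpha)^3$, and the constant $1$ in $\phi$ contributes $-2A/\alpha^3-2B/(1-\alpha)^3$, so $g''=(2\phi'^2-\phi''\phi)/\phi^3<0$ on $(0,1)$ with no case splitting. What your approach buys is a shorter, more transparent argument in which the hypothesis $1-a_1-b_1\le 0$ does its work in one line, and it even yields strict concavity; what the paper's approach buys is that it requires no clever reformulation, only calculus, and its intermediate quantities ($g'$ and the cubic $f$) are reused elsewhere (e.g., the stationarity equation for $\alpha^*$). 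Both are complete and correct.
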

\begin{proof}
See Appendix A.
\end{proof}
\leref{le.1} gives the convex conditions of the objective function. According
to \leref{le.1}, we know that there is a global maximal point for
\eqref{eq.newmrcsinr}. Taking the derivative of \eqref{eq.newmrcsinr} and
setting it as 0, we have
\begin{equation}\label{eq.equation}
(1-a_1)\alpha^2-2b_1\alpha+b_1=0.
\end{equation}

\remark \label{rem.noroot} It can be observed that when $1-a_1-b_1\le0$ and
$b_1>0$, $g_d(\alpha)$ is non-positive at both $\alpha=0$ and $\alpha=1$.
Since the leading coefficient of $g_d(\alpha)$ is positive, $g_d(\alpha)<0$
for $\alpha\in(0,1)$, and it has no root in $(0,1)$.

Based on \remref{rem.noroot}, we deduce that $g(\alpha)>0$ for $\alpha\in (0,
1)$. In addition, we have $g(0)=0$ and $g(1)=0$. Therefore, there is an
optimal $\alpha$ within $(0,1)$ rather than at boundaries.

\subsubsection{The Optimizing $\alpha$} We discuss the optimal $\alpha$ in
three cases, depending on $T_d$, as compared to $K$.
\begin{itemize}
\item If $T_d=K$, then $1-a_1=0$. Hence, we have $\alpha^*=1/2$, and
\begin{equation}
\textsf{SNR}^{\text{(MRC)}}(\frac{1}{2})=\frac{M-1}{K-1}\frac{1/4}{1/4+\frac{K(\rho T+1)}{\rho^2T^2(K-1)}}
\end{equation}

\item If $T_d<K$, then $1-a_1>0$. Since $b_1>1-a_1$, $b_1/(1-a_1)>1$. Between
the two roots of \eqref{eq.equation}, the one in between 0 and 1 is
\begin{equation}\label{eq.alphale}
\alpha^*=\frac{b_1-\sqrt{b_1(a_1+b_1-1)}}{1-a_1}.
\end{equation}

\item If $T_d>K$, then $1-a_1<0$. It can be deduced that in this case
$\alpha^*$ in \eqref{eq.alphale} is still between 0 and 1 and therefore is the
optimal $\alpha$.
\end{itemize}

Substituting \eqref{eq.defab} into \eqref{eq.alphale}, we have
\begin{equation}\label{eq.astar}
\alpha^*=\frac{\sqrt{(\rho TK+T_d)(\rho TT_d+T_d)}-(\rho TK+T_d)}{\rho T(T_d-K)}.
\end{equation}
We can simplify the expression for the optimal $\alpha$ at high and low SNR:
\begin{itemize}
\item At high SNR, the optimal $\alpha^*$ is
\begin{equation}
\alpha^*_{\rm H}\approx \frac{\sqrt{KT_d}-K}{T_d-K}=\frac{\sqrt{K}}{\sqrt{T_d}+\sqrt{K}}.
\end{equation}

\item Similarly, at low SNR, the optimal $\alpha^*$ is
\begin{align}
\alpha^*_{\rm L}\approx\frac{1}{2}.
\end{align}
As a result, $\textsf{SNR}^{\text{(MRC)}}(\alpha^*_{\rm
L})=(M-1)/(4T_d(K-1))$. If the SNR is low, the fraction between the training
and data is independent on the system parameters $M$, $K$, $\rho_d$,
$\rho_{\tau}$, $T_\tau$, and $T$.
\end{itemize}

\subsection{ZF Case without Peak Power Constraint} \label{sec.zf}

This optimization problem in the ZF case is similar to that in
Section.~\ref{sec.split}. Here, we only give the final optimization results.

Using \eqref{eq.optfra} we can rewrite \eqref{eq.zfrate} as
\begin{equation} \label{eq.zfsinr}
\textsf{SNR}^{\text{(ZF)}}(\alpha)=\frac{\rho T(M-K)\alpha(1-\alpha)}{(T_d-K)(\gamma+\alpha)}
\end{equation}

The optimized results of $\alpha^*$ are the same as in \eqref{eq.optalpha}.
The optimized $\textsf{SNR}^{\text{(ZF)}}$ is just given by $(M-K)\rhoe^*$. At
both high SNR and low SNR, the results are the same as in the MRC case.

\subsection{MRC and ZF with Peak Power Constraint}

So far we have ignored the peak power constraint. When the peak power is
considered, and $\alpha^*$ is not within the feasible set \eqref{eq.solcon2},
the optimal $\widetilde{\alpha}^*$ with the peak power constraint is the
$\alpha$ within the feasible set that is closest to the $\alpha^*$ we derived,
which is at one of the two boundaries of the feasible set, due to the
concavity of the objective function.

\section{Achievable Rate Maximization in General} \label{sec.solb}

In this section, $\alpha$ and $T_d$ are jointly optimized for maximizing the
achievable rate of uplink MU-MIMO system as illustrated in
\eqref{eq.oripro}--\eqref{eq.tdconstr} when both average and peak power
constraints are considered.

The feasible set with respective to $\alpha$ and $T_d$ is illustrated in
\figref{fig.feas}. It can be observed that the feasible region is in between
the following two lines
\begin{eqnarray}
T_d&=&-\rho T\alpha/\rhomax+T, \label{eq.slab1} \\ T_d&=&-\rho
T\alpha/\rhomax+\rho T/\rhomax \label{eq.slab2}
\end{eqnarray}
where $\alpha$ and $T_d$ satisfy \eqref{eq.alconalpha} and
\eqref{eq.tdconstr}.

\begin{figure}[htp]
\centering
\includegraphics[width=\iftoggle{twocolumn}{\linewidth}{.8\linewidth}]{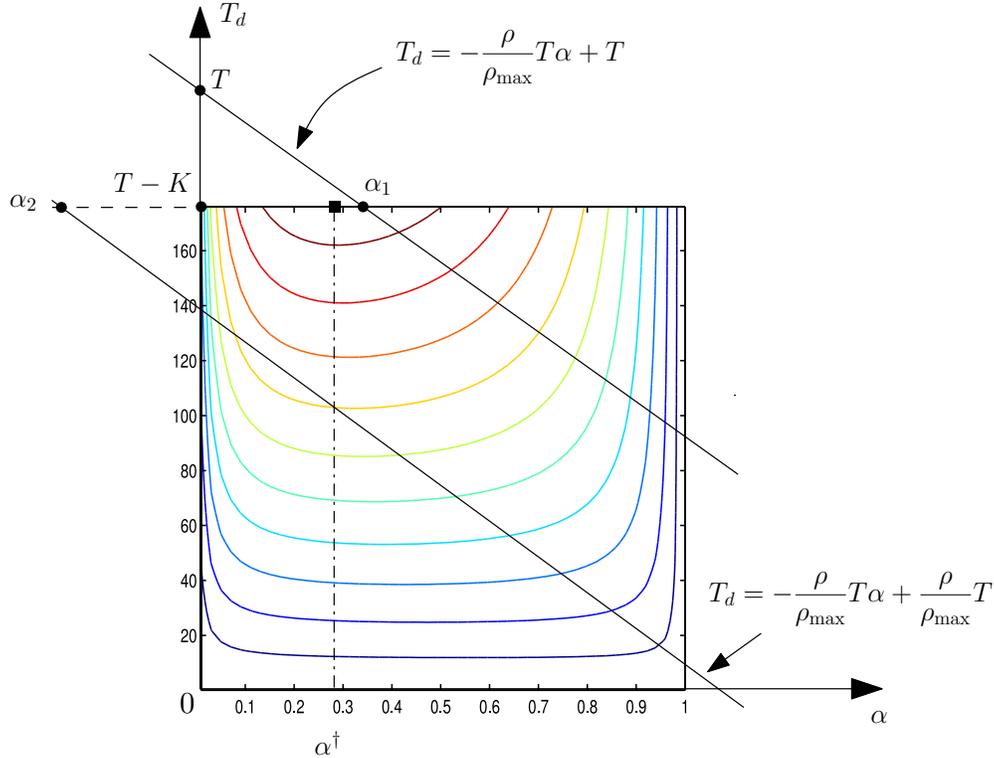}
\caption{Feasible region and the contour of the objective function in the MRC
case; $T=196$, $K=20$ and $M=50$.}
\label{fig.feas}
\end{figure}

We have the following lemma that is useful for describing the behavior of our
objective function $R^{(\mathcal A)}(\alpha, T_d)$ when $\alpha$ is fixed.

\begin{lemma} \label{lemma.2}
The function $f(x)=x\ln(1+a/(b+cx))$, when $a,b,c,x>0$, is concave and
monotonically increasing.
\end{lemma}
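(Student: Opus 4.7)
The plan is to write $f(x)=x\phi(x)$ with $\phi(x):=\ln(1+a/(b+cx))=\ln(b+cx+a)-\ln(b+cx)$, and reduce both claims to properties of $\phi$ and its first two derivatives. A direct computation gives
\[
\phi'(x)=-\frac{ac}{(b+cx)(b+cx+a)},\qquad \phi''(x)=\frac{ac^2(2b+2cx+a)}{[(b+cx)(b+cx+a)]^2},
\]
so on $(0,\infty)$ the auxiliary function $\phi$ is positive, strictly decreasing, and strictly convex.

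Concavity of $f$ will then follow from $f''(x)=2\phi'(x)+x\phi''(x)$. Writing $P(x):=(b+cx)(b+cx+a)$ and clearing the common positive factor $ac/P(x)^2$, the inequality $f''(x)\le 0$ reduces to
\[
cx(2b+2cx+a)\le 2(b+cx)(b+cx+a).
\]
Expanding both sides and cancelling, the difference RHS${}-{}$LHS collapses to $2b(b+cx)+a(2b+cx)$, a sum of terms that are manifestly nonnegative when $a,b,c,x\ge 0$. This establishes concavity on $(0,\infty)$.

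For monotonicity I would avoid a direct sign chase of $f'$ and argue softly instead. Observe that $f(0)=0$, that $f(x)>0$ for $x>0$ because $\phi>0$, and that $\lim_{x\to\infty}f(x)=a/c$ (from $\ln(1+u)\sim u$ with $u=a/(b+cx)\to 0$). Were $f'(x_0)<0$ at some interior point, the concavity just proved would force $f'(x)\le f'(x_0)<0$ for all $x\ge x_0$ and hence $f(x)\to-\infty$, contradicting the finite positive limit. Thus $f'\ge 0$ throughout, and strict monotonicity follows from $f'(0^+)=\phi(0)=\ln(1+a/b)>0$ combined with the concavity. The main obstacle is the second-derivative inequality: the two summands in $2\phi'(x)+x\phi''(x)$ are of comparable order and it is not a priori clear which dominates; the point that makes the argument clean is that after clearing denominators the difference becomes a quadratic in $x$ whose coefficients assemble into a sum of nonnegative quantities, so no case analysis is required.
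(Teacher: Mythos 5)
Your proof is correct and follows essentially the same route as the paper: both establish concavity by computing the second derivative explicitly and observing that, after clearing the positive denominator $[(b+cx)(b+cx+a)]^2$, the numerator is a manifestly nonpositive combination of the positive parameters, and both then deduce $f'>0$ from concavity together with the behavior at infinity. The only cosmetic difference is that you invoke the finite limit $f(x)\to a/c$ to rule out $f'<0$, whereas the paper uses $\lim_{x\to\infty}f'(x)=0$ directly; both are valid.
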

\begin{proof}
See Appendix B.
\end{proof}

In summary, the convexity of the objective function is known to have the
following two properties:
\begin{enumerate}
\item [(\textbf{P1})] From \leref{le.1}, for fixed $T_d$, $R^{(\mathcal{A})}$
is a concave function with respect to $\alpha$.

\item [(\textbf{P2})] From \leref{lemma.2}, for fixed $\alpha$,
$R^{(\mathcal{A})}$ is a concave function and monotonically increasing with
respect to $T_d$.
\end{enumerate}
Since the feasible set is convex, our optimization problem (OP) is a biconvex
problem that may include multiple local optimal solutions. However, after
studying the convexity of the objective function, there are only three
possible cases for the optimal solutions, as we discuss below.

In the remainder of this section, let $\adag$ denote the optimal $\alpha$ when
$T_d=T-K$, which is given by \secref{sec.mrc} and \secref{sec.zf} for MRC and
ZF processing.

\subsection{Case 1: $\rho_{\tau}$ is limited by $\rhomax$}

Define $\alpha_1:=\rhomax K/\rho T$, which is the root of $T-K=-\rho
T\alpha/\rhomax+T$ in $\alpha$ (see \figref{fig.feas}). In the case where
$\alpha_1<\adag$, because of the property P2 the optimal $(\alpha^*, T_d^*)$
must be on one of the two lines given by i) $T_d=-\rho T\alpha/\rhomax+T$,
$\alpha\in [\alpha_1, 1]$, and ii) $T_d=T-K$, $\alpha \in [0, \alpha_1]$.

On the line $T_d=T-K, \alpha\in[0,\alpha_1]$ the objective function is concave
and increasing with $\alpha$, thanks to property P1. Hence, we only need to
consider the line $T_d=-\rho T\alpha/\rhomax+T, \alpha\in[\alpha_1,1]$.

\begin{lemma}\label{le.quasi}
The objective function $\objfunc{(\text{MRC})}$ along the line $T_d=-\rho
T\alpha/\rhomax+T, \alpha\in[\alpha_1,1]$ is quasiconcave in $\alpha$.
\end{lemma}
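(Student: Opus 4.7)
The plan exploits that along the line $T_d = -\rho T\alpha/\rhomax + T$, the peak training power constraint \eqref{eq.alcon1} is active, i.e., $\rho_\tau = \rhomax$. Substituting $\rho_\tau = \rhomax$, $T_\tau = \rho T\alpha/\rhomax$, $T_d = T - T_\tau$, and $\rho_d = (1-\alpha)\rho\rhomax/(\rhomax - \rho\alpha)$ (from \eqref{eq.optfra}) into \eqref{eq.mrcsinr}--\eqref{eq.R} collapses the constrained objective to a one-variable function $r(\alpha) := \objfunc{(\text{MRC})}$ on $[\alpha_1,1]$, where $\alpha_1 = \rhomax K/(\rho T)$. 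After clearing the factor $(\rhomax - \rho\alpha)$, the resulting $\textsf{SNR}^{(\text{MRC})}(\alpha)$ becomes a ratio of two polynomials of degree two in $\alpha$, so $r(\alpha)$ is a linear function of $\alpha$ times the logarithm of such a ratio.

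I would prove quasi-concavity by establishing unimodality of $r$. At $\alpha = 1$ we have $\rho_d = 0$ and hence $r(1) = 0$; in the interior $r(\alpha) > 0$; and $r$ is smooth. For such a function on $[\alpha_1, 1]$, quasi-concavity is equivalent to $r'(\alpha) = 0$ having at most one root on $(\alpha_1, 1)$. Writing
\begin{equation*}
r'(\alpha) = \frac{K}{T}\left[T_d'(\alpha)\log_2\bigl(1+\textsf{SNR}^{(\text{MRC})}\bigr) + \frac{T_d(\alpha)}{\ln 2}\cdot \frac{d\textsf{SNR}^{(\text{MRC})}/d\alpha}{1+\textsf{SNR}^{(\text{MRC})}}\right],
\end{equation*}
and clearing common positive factors, the equation $r'(\alpha) = 0$ can be rearranged into the scalar form $\Phi(\alpha) := A(\alpha) + B(\alpha)\ln\bigl(1+\textsf{SNR}^{(\text{MRC})}(\alpha)\bigr) = 0$ for explicit rational functions $A,B$ of $\alpha$ derived from the quadratic-over-quadratic structure above.

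The technical core is to show $\Phi$ has at most one zero on $(\alpha_1, 1)$. My plan is to sign $\Phi'(\alpha)$ and show $\Phi$ is monotone on the interval, exploiting the quadratic structure of the numerator and denominator of $1+\textsf{SNR}^{(\text{MRC})}(\alpha)$. The principal obstacle is the algebraic bookkeeping: the substitution for $\rho_d$ injects $(\rhomax - \rho\alpha)$ into denominators and forcing a closed-form sign for $\Phi'$ is tedious. A cleaner route is to parameterize by $u = T_\tau$ (which is linear in $\alpha$), removing $\rhomax$ from many intermediate terms and exposing the dependence more transparently. If direct global monotonicity of $\Phi$ remains elusive, I would fall back on the second-order criterion $r'(\alpha^*) = 0 \Rightarrow r''(\alpha^*) < 0$; combined with the boundary values $r(\alpha_1) > 0$ and $r(1) = 0$, this forces a unique critical point and hence quasi-concavity.
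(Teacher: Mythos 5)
Your reduction to a one-variable function along the active-constraint line (setting $\rho_\tau=\rhomax$, $T_\tau=\rho T\alpha/\rhomax$, $\rho_d=(1-\alpha)\rho\rhomax/(\rhomax-\rho\alpha)$) matches the paper's setup, and the boundary observations $r(1)=0$, $r>0$ on the interior are correct. But there is a genuine gap: the entire technical content of the argument --- showing that $\Phi(\alpha)=A(\alpha)+B(\alpha)\ln\bigl(1+\textsf{SNR}^{(\text{MRC})}(\alpha)\bigr)$ is monotone on $(\alpha_1,1)$, or alternatively that $r''(\alpha^*)<0$ at every critical point --- is deferred rather than carried out. These are exactly the hard steps; ``sign $\Phi'(\alpha)$ and show $\Phi$ is monotone'' restates the problem rather than solving it, and you yourself flag that it may remain elusive. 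As written, nothing in the proposal actually establishes that $r'$ has at most one zero, so quasiconcavity is not proved.

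The idea you are missing is that \leref{le.1} does the heavy lifting and lets you bypass the derivative analysis entirely. Substituting the line into the SNR yields the quadratic-over-quadratic form \eqref{eq.sinrmrcq}; checking $a_2,c_2>0$ and $a_2-b_2-c_2<0$ puts it under the hypotheses of \leref{le.1}, so $\textsf{SNR}^{(\text{MRC})}(\alpha)$ is \emph{concave}, and hence so is $\log_2\bigl(1+\textsf{SNR}^{(\text{MRC})}(\alpha)\bigr)$ (composition with a concave nondecreasing function). The objective is then a positive affine function times a nonnegative concave function, and quasiconcavity of such a product is routine: the paper shows each super-level set $\{\alpha: R^{(\text{MRC})}(\alpha)\ge\beta\}$ is convex by rewriting it as $\{\alpha:\phi_\beta(\alpha)\ge 0\}$ with $\phi_\beta$ as in \eqref{eq.quasiproof}, a sum of two concave functions; equivalently, the product of two positive concave functions is log-concave and therefore quasiconcave. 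If you want to salvage your route, you should first establish the concavity of the log term via \leref{le.1} --- without it, controlling the sign changes of $r'$ is not tractable.
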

\begin{proof}
Consider MRC processing. Substituting \eqref{eq.slab1} into
$\objfunc{(\text{MRC})}$, we have
\begin{equation}\label{eq.conmrcrate}
R^{\text{(MRC)}}(\alpha)=
  \frac{K}{T}\left(-\frac{\rho T}{\rhomax}\alpha+T\right)
  \log_2(1+\textsf{SNR}^{\text{(MRC)}}(\alpha))
\end{equation}
where
\begin{equation}\label{eq.sinrmrcq}
\textsf{SNR}^{\text{(MRC)}}(\alpha)=
  \frac{\alpha(\alpha-1)\rho^2T^2(M-1)}{a_2\alpha^2-b_2\alpha-c_2},
\end{equation}
and $a_2=\rho^2T^2(K-1)+\rho^2T^2/\rhomax$, $b_2=\rho^2T^2(K-1)+\rho T^2-\rho
TK-\rho T/\rhomax$ and $c_2=K\rho T+T$. Since $R^{\text{(MRC)}}(\alpha)>0$, in
order to prove the quasi-concavity of $R^{\text{(MRC)}}(\alpha)$, we need to
prove that the super-level set $\mathcal{S}_{\beta}=\{\alpha|0<\alpha<1,
R^{\text{(MRC)}}(\alpha)\ge\beta\}$ for each $\beta\in\mathbb{R}^{+}$ is
convex. Equivalently, if we define
\begin{equation}\label{eq.quasiproof}
\phi_{\beta}(\alpha)=
  \frac{\beta}{\frac{K}{T}(\frac{\rho T\alpha}{\rhomax}-T)}
    +\log_2(1+\textsf{SNR}^{\text{(MRC)}}(\alpha)).
\end{equation}
we only need to prove that
$\mathcal{S}_{\phi}=\{\alpha|0<\alpha<1,\phi_{\beta}(\alpha)\ge0\}$ is a
convex set.

It can be checked that the first part of $\phi_\beta(\alpha)$, namely
$\beta/[{\frac{K}{T}(\frac{\rho T\alpha}{\rhomax}-T)}]$, is concave for
$\alpha\in [0, 1]$. For the other part of $\phi_{\beta}(\alpha)$, from
\eqref{eq.sinrmrcq} we know that
\begin{equation}
a_2-b_2-c_2=\rho T(\frac{\rho}{\rhomax}-1)-T(1-\alpha\frac{\rho}{\rhomax})<0
\end{equation}
where $a_2,c_2>0$. Applying \leref{le.1}, we know
$\textsf{SNR}^{\text{(MRC)}}(\alpha)$ is concave. Hence,
$\log_2(1+\textsf{SNR}^{\text{(MRC)}}(\alpha))$ is also concave since function
$\log(1+x)$ is concave and non-decreasing \cite{bova11}. Therefore, its
super-level set $\mathcal{S}_{\phi}$ is convex. It follows that the
super-level set $\mathcal{S}_{\beta}$ of $R^{\text{(MRC)}}(\alpha)$ is convex
for each $\beta\ge 0$. The objective function is thus quasiconcave.
\end{proof}

Thanks to \leref{le.quasi}, we can find the optimal $\alpha$ by setting the
derivative of \eqref{eq.conmrcrate} with respect to $\alpha$ to 0. Efficient
one-dimensional searching algorithm such as Newton method or bisection
algorithm \cite{bova11}, can be adopted to find out the optimal $\alpha$.

\subsection{Case 2: $\rho_d$ is limited by $\rhomax$}

Define $\alpha_2:=1-\rhomax(T-K)/\rho T$, which is the root of $T-K=\rho
T\alpha/\rhomax+\rho T/\rhomax$ in $\alpha$. If $\alpha_2>\adag$, because of
the property P2 the optimal $(\alpha^*, T_d^*)$ must be on one of the two
lines given by i) $T_d=-\rho T\alpha/\rhomax+T, \alpha\in(\alpha_1,1)$,
$\alpha\in [\alpha_1, 1]$, and ii) $T_d=T-K$, $\alpha \in [\alpha_2,
\alpha_1]$. Along the line $T_d=T-K, \alpha\in(\alpha_1,1)$, the corresponding
function is decreasing in $\alpha$ because of the property P1. Also
considering P2, which implies that the optimal point in this case cannot
include $T_d<T-K$, we conclude that the point $(\alpha^*, T_d^*)=(\alpha_2,
T-K)$ is the global optimal solution of the problem.

\subsection{Case 3: Both $\rho_d$ and $\rho_{\tau}$ are not limited by
$\rhomax$}

If $\alpha_2<\adag<\alpha_1$, the optimal point is achieved at $(\alpha^*,
T_d^*)=(\adag, T-K)$, according to properties P1 and P2.

Summarizing what we have discussed so far, we have the following theorem.
\begin{theorem} \label{th.mrc}
For the MRC receiver, set $\adag=1/2$ if $T_d=K$ and otherwise set $\adag$
according to \eqref{eq.astar} when $T_d=T-K$. Set $\alpha_1=\rhomax K/\rho T$
and set $\alpha_2= 1-\rhomax (T-K)/\rho T$. The solution for the joint
optimization of training energy allocation $\alpha$ and the training duration
$T_\tau=T-T_d$ is given in three cases. Case~1) If $\alpha_1<\adag$, then
$\alpha^*$ is given by the maximizer of $R^{(\text{MRC})}(\alpha)$ in
\eqref{eq.conmrcrate}, and $T_d^*=-\rho T\alpha^* /\rhomax + T$; Case~2) If
$\alpha_2>\adag$ then $(\alpha^*, T_d^*)=(\alpha_2, T-K)$; Case~3) If
$\alpha_2<\adag<\alpha_1$, then $(\alpha^*, T_d^*)=(\adag, T-K)$.
\end{theorem}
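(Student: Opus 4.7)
The plan is to exploit the geometry of the feasible region together with the two convexity properties P1 and P2 already established. Because property P2 says that $R^{(\text{MRC})}$ is monotonically increasing in $T_d$ for every fixed $\alpha$, any optimizer must lie on the upper boundary of the feasible set. Reading off this upper boundary from \eqref{eq.alcon1} and \eqref{eq.tdconstr}, it consists of two linear pieces: the horizontal segment $T_d=T-K$ (from $T_\tau\ge K$) and the slanted segment $T_d=-\rho T\alpha/\rhomax+T$ (from $\rho_\tau\le\rhomax$). Solving $T-K=-\rho T\alpha/\rhomax+T$ shows these two pieces meet precisely at $\alpha=\alpha_1=\rhomax K/(\rho T)$, so the horizontal piece is the active upper boundary for $\alpha\le\alpha_1$ and the slanted piece is active for $\alpha\ge\alpha_1$. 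Likewise, the lower slab \eqref{eq.slab2} crosses $T_d=T-K$ at $\alpha=\alpha_2=1-\rhomax(T-K)/(\rho T)$, so the horizontal upper piece is feasible only for $\alpha\in[\alpha_2,\alpha_1]$.

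Restricted to the horizontal piece, property P1 tells us that $R^{(\text{MRC})}$ is concave in $\alpha$ with the unconstrained maximizer $\adag$ derived in \secref{sec.mrc}. This immediately disposes of two of the three cases. In Case~3 ($\alpha_2<\adag<\alpha_1$) the unconstrained optimum $(\adag, T-K)$ is itself feasible, so P1 and P2 jointly certify it as the global optimum. In Case~2 ($\adag<\alpha_2$) the concave profile on $T_d=T-K$ is strictly decreasing throughout $[\alpha_2,\alpha_1]$, so the best horizontal point is $(\alpha_2,T-K)$; moreover, moving to any other feasible $(\alpha,T_d)$ either leaves $\alpha\ge\alpha_2$ with $T_d<T-K$ (worse by P2) or requires $\alpha<\alpha_2$, which is infeasible, so this candidate is globally optimal.

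The nontrivial case is Case~1, where $\adag>\alpha_1$. On the horizontal piece $R^{(\text{MRC})}$ is strictly increasing in $\alpha$ on $[\alpha_2,\alpha_1]$ and attains its best horizontal value at $(\alpha_1,T-K)$, which is also the left endpoint of the slanted piece. On the slanted piece, substituting $T_d=-\rho T\alpha/\rhomax+T$ reduces $R^{(\text{MRC})}$ to the single-variable function \eqref{eq.conmrcrate}, which is quasiconcave by \leref{le.quasi}. Hence the optimum along the slanted segment is the unique interior stationary point of \eqref{eq.conmrcrate}, located by a 1D bisection or Newton search, and paired with $T_d^*=-\rho T\alpha^*/\rhomax+T$. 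The main obstacle in this argument is not the geometric case-splitting itself, which is mechanical once P1 and P2 are in hand, but rather that after substituting the slanted boundary the objective is no longer jointly concave in the original variables; \leref{le.quasi} is precisely what is needed to save the 1D search, and continuity of $R^{(\text{MRC})}$ at the junction $\alpha=\alpha_1$ guarantees the quasiconcave optimum along the slanted segment dominates the horizontal candidate $(\alpha_1,T-K)$ in Case~1.
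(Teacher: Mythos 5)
Your proposal is correct and follows essentially the same route as the paper: reduce to the upper boundary via property P2, use property P1 (concavity in $\alpha$ on $T_d=T-K$ with unconstrained maximizer $\adag$) to settle Cases 2 and 3, and invoke \leref{le.quasi} to justify the one-dimensional search along the slanted segment in Case 1. Your added observation that continuity at the junction $\alpha=\alpha_1$ lets the slanted-segment optimum dominate the horizontal candidate is a small but welcome clarification of a step the paper leaves implicit.
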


We also have similar results regarding the optimal energy allocation factor
$\alpha$ and training period $T_\tau$ for the ZF case. The only difference is
that the achievable rate $R^{\text{(ZF)}}(\alpha)$ should be given by
substituting \eqref{eq.slab1} into $\objfunc{(\text{ZF})}$, which is
\begin{equation}\label{eq.conzfrate}
R^{\text{(ZF)}}(\alpha)=
  \frac{K}{T}\left(-\frac{\rho T}{\rhomax}\alpha+T\right)
  \log_2(1+\textsf{SNR}^{\text{(ZF)}}(\alpha))
\end{equation}
where
\begin{equation}\label{eq.sinrzfq}
\textsf{SNR}^{\text{(MRC)}}(\alpha)=
  \frac{\alpha(\alpha-1)\rho^2T^2(M-K)}{a_3\alpha^2-b_3\alpha-c_3},
\end{equation}
and $a_3=\rho^2T^2/\rhomax$, $b_3=\rho T^2-\rho TK-\rho T/\rhomax$ and
$c_3=K\rho T+T$. Comparing \eqref{eq.conmrcrate},~\eqref{eq.sinrmrcq} and
\eqref{eq.conzfrate},~\eqref{eq.sinrzfq}, we can obtain the results for ZF
receiver as follows.

\begin{theorem} \label{th.zf}
For the ZF receiver, set $\adag=1/2$ if $T_d=K$ and otherwise set $\adag$
according to \eqref{eq.optalpha} when $T_d=T-K$. Set $\alpha_1=\rhomax K/\rho
T$ and set $\alpha_2= 1-\rhomax (T-K)/\rho T$. The solution for the joint
optimization of training energy allocation $\alpha$ and the training duration
$T_\tau=T-T_d$ is given in three cases. Case~1) If $\alpha_1<\adag$, then
$\alpha^*$ is given by the maximizer of $R^{(\text{ZF})}(\alpha)$ in
\eqref{eq.conzfrate}, and $T_d^*=-\rho T\alpha^* /\rhomax + T$; Case~2) If
$\alpha_2>\adag$ then $(\alpha^*, T_d^*)=(\alpha_2, T-K)$; Case~3) If
$\alpha_2<\adag<\alpha_1$, then $(\alpha^*, T_d^*)=(\adag, T-K)$.
\end{theorem}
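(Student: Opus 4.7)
The plan is to mirror the case analysis used in the proof of Theorem~\ref{th.mrc}, substituting $\text{SNR}^{(\text{ZF})}$ for $\text{SNR}^{(\text{MRC})}$ throughout. Concretely, I would first re-establish the two structural properties P1 and P2 for the ZF rate $R^{(\text{ZF})}(\alpha,T_d)$, then run the same three-way case analysis based on the position of $\adag$ relative to $\alpha_1$ and $\alpha_2$.

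Property P1 (concavity of $R^{(\text{ZF})}$ in $\alpha$ for fixed $T_d$) was essentially already obtained in Section~\ref{sec.zf}, where the ZF SNR in \eqref{eq.zfsinr} is shown to be a rescaling of the point-to-point expression $\rhoe$; concavity follows from \leref{le.1} once one checks the analogous sign conditions on $a_3,b_3,c_3$ in \eqref{eq.sinrzfq} (namely $a_3>0$, $c_3>0$, and $a_3-b_3-c_3\le 0$). Property P2 (concavity and monotonicity of $R^{(\text{ZF})}$ in $T_d$ for fixed $\alpha$) reduces to \leref{lemma.2}: with $\alpha$ fixed, $\rho_\tau T_\tau=\alpha\rho T$ is fixed and $\rho_d=(1-\alpha)\rho T/T_d$, so substitution into \eqref{eq.zfrate} puts $\text{SNR}^{(\text{ZF})}$ in the form $A/(B+CT_d)$ for positive constants $A,B,C$ depending only on $\alpha$ and the system parameters, whence the rate becomes $R^{(\text{ZF})}=(K/T)\,T_d\log_2(1+A/(B+CT_d))$ and \leref{lemma.2} applies directly.

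With P1 and P2 in place, the argument proceeds verbatim as in the MRC proof. P2 forces the optimum in $T_d$ to sit on the upper boundary of the feasible slab, i.e., on $T_d=T-K$ when admissible or on the slanted line $T_d=-\rho T\alpha/\rhomax+T$ when the training-phase peak constraint \eqref{eq.alcon1} is binding. Combining with P1 yields three regimes depending on whether $\adag$ falls below $\alpha_1$ (Case~1, training constraint active, optimum on the slanted line), above $\alpha_2$ (Case~2, data-phase peak constraint active, optimum at $(\alpha_2,T-K)$), or inside $(\alpha_2,\alpha_1)$ (Case~3, unconstrained interior optimum $(\adag,T-K)$).

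The only ingredient that needs genuine re-derivation, and the main obstacle, is the ZF analog of \leref{le.quasi}: quasi-concavity of $R^{(\text{ZF})}(\alpha)$ in \eqref{eq.conzfrate} along the slanted line. I would copy the super-level-set argument from \leref{le.quasi}, defining $\phi_\beta(\alpha)=\beta/[(K/T)(\rho T\alpha/\rhomax-T)]+\log_2(1+\text{SNR}^{(\text{ZF})}(\alpha))$; the first summand is concave because its denominator is strictly negative on the feasible interval, and concavity of $\text{SNR}^{(\text{ZF})}(\alpha)$ in \eqref{eq.sinrzfq} again follows from \leref{le.1} applied to the rescaled coefficients $(a_3,b_3,c_3)$ (the check $a_3-b_3-c_3<0$ is a routine algebraic verification mirroring the MRC bookkeeping). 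Composition with the concave, non-decreasing $\log_2(1+\cdot)$ gives concavity of $\phi_\beta$, hence convexity of its super-level set, and thus quasi-concavity of $R^{(\text{ZF})}(\alpha)$; the maximizer in Case~1 can then be located by a one-dimensional search (Newton or bisection) exactly as in the MRC case.
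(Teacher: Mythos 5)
Your proposal is correct and takes essentially the same route as the paper, which justifies Theorem~\ref{th.zf} simply by comparing \eqref{eq.conzfrate}--\eqref{eq.sinrzfq} with their MRC counterparts and reusing the Case~1--3 analysis; you in fact supply more of the bookkeeping (the sign check $a_3-b_3-c_3\le 0$, the reduction of property P2 to \leref{lemma.2}, and the ZF analogue of \leref{le.quasi}) than the paper records. The only small imprecision is that for property P1 at fixed $T_d$ the relevant expression is \eqref{eq.zfsinr}, whose denominator is linear in $\alpha$ and whose concavity is the concavity of $\rho_{\text{eff}}$ already asserted in Section~\ref{sec.split}; the coefficients $(a_3,b_3,c_3)$ of \eqref{eq.sinrzfq} arise only along the slanted line of Case~1.
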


We also remark that our results are applicable for any $M>K$, including when
$M\gg K$, i.e., the massive MIMO system case.

\section{Discussion}\label{massivecase}

When $M$ increases, the transmit power of each user can be reduced
proportionally to $1/\sqrt{M}$ for large $M$ while maintaining a fixed rate as
discussed in \secref{powersave} and \cite{nglm13}. Here we discuss the
asymptotic achievable rates when $M\to\infty$.

\subsection{Optimized $\alpha$ if $T_d$ is fixed when $M\to\infty$} If the
energy over the training and data phases is allocated differently, we have the
following results after optimizing the $\alpha$ for large $M$.

\begin{theorem}
For both ZF and MRC, let $\rho_u:=\sqrt{M}\rho$ be fixed. Then, the maximum
achievable rate can be
\begin{equation}
R^{(\mathcal{A})}\rightarrow \frac{T_d}{T}K\log_2(1+\frac{\rho_u^2T^2}{4T_d}),\quad M\to\infty
\end{equation}
\end{theorem}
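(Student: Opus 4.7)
The plan is to exploit the fact that the scaling $\rho = \rho_u/\sqrt{M}$ with $\rho_u$ fixed drives the per-user transmit SNR to zero as $M\to\infty$, which puts the system squarely in the low-SNR regime already analyzed in Section~\ref{sec.split}. First I would invoke \eqref{eq.lowsnr}: the optimal energy split satisfies $\alpha^\ast \to 1/2$, and the resulting effective SNR is
\begin{equation*}
  \rho_{\text{eff}}^\ast \;\approx\; \frac{(\rho T)^2}{4T_d} \;=\; \frac{\rho_u^2 T^2}{4 M T_d}.
\end{equation*}
Since the post-processing SNR of both MRC and ZF is monotone in $\rho_{\text{eff}}$, and $\log(1+\cdot)$ is monotone, the rate-maximizing $\alpha$ coincides with the $\rho_{\text{eff}}$-maximizing one, so it suffices to substitute $\rho_{\text{eff}}^\ast$ into the two SNR expressions.

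Next I would take the $M\to\infty$ limit of each post-processing SNR. For ZF, $\rho_{\text{eff}}^\ast(M-K) = \rho_u^2 T^2 (M-K)/(4MT_d) \to \rho_u^2 T^2/(4T_d)$, since $K$ is fixed. For MRC, the SNR is $\rho_{\text{eff}}^\ast(M-1)/(\rho_{\text{eff}}^\ast(K-1)+1)$; the factor $\rho_{\text{eff}}^\ast(K-1)$ is $O(1/M)$ and vanishes, so the denominator tends to $1$, while the numerator $\rho_{\text{eff}}^\ast(M-1)$ tends to the same limit $\rho_u^2 T^2/(4T_d)$. Hence both receivers achieve the same limiting post-processing SNR.

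Plugging into the achievable sum-rate expression $R^{(\mathcal{A})}(\alpha,T_d) = (T_d/T)\,K\log_2(1+\textsf{SNR}^{(\mathcal{A})})$ from \eqref{eq.R} gives the claimed asymptote $(T_d/T)\,K\log_2\bigl(1 + \rho_u^2 T^2/(4T_d)\bigr)$ in both cases. A minor bookkeeping step is to verify that we may interchange the $M\to\infty$ limit with the maximization over $\alpha$: because the SNR expressions \eqref{eq.mrcsinr} and \eqref{eq.zfrate} are rational in $\alpha$ on a compact feasible interval and the low-SNR approximation in \eqref{eq.lowsnr} becomes tight as $\rho\to 0$, continuity in $\alpha$ supplies this step without difficulty. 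I do not anticipate any substantial obstacle; the argument is essentially a careful tracking of orders in $1/\sqrt{M}$ applied to already-derived formulas.
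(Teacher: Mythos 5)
Your proposal is correct and follows essentially the same route as the paper: both arguments exploit that $\rho=\rho_u/\sqrt{M}\to 0$ places the system in the low-SNR regime, yielding the optimal split $\alpha^*=1/2$ and the limiting post-processing SNR $\rho_u^2T^2/(4T_d)$ for both MRC and ZF. The only (immaterial) difference is ordering — the paper first takes the $M\to\infty$ limit of $\textsf{SNR}^{(\mathcal{A})}(\alpha)$ in \eqref{eq.newmrcsinr} and \eqref{eq.zfsinr} and then maximizes over $\alpha$, whereas you optimize $\alpha$ via \eqref{eq.lowsnr} first and then pass to the limit; your remark about justifying this interchange is at least as careful as the paper's own treatment.
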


\begin{proof}
According to \eqref{eq.newmrcsinr} and \eqref{eq.zfsinr}, when $M\to\infty$,
we have
\begin{equation}
\textsf{SNR}^{(\mathcal{A})}(\alpha)=\frac{\alpha(1-\alpha)\rho_u^2T^2}{T_d},
\end{equation}
where the maximum received SNR can be obviously obtained when $\alpha=1/2$.
\end{proof}

Note, if the peak power constraints are considered, $\alpha$ needs to be
within the interval as shown in \eqref{eq.solcon2}. Otherwise, the optimal
solution is located at the boundary of \eqref{eq.solcon2}.

\remark If the power is allocated equally between the two phases, we have
$\alpha=T_{\tau}/T$ \cite{nglm13}, then the difference of achievable rate
between the optimized and the equally allocated power scheme is
\begin{align}
\Delta R^{(\mathcal{A})}(\alpha)&=\frac{T_d}{T}K(\log_2(1+\frac{\rho^2_uT^2}{4T_d})-\log_2(1+T_{\tau}\rho^2_u)),
\nonumber \\
&=\frac{T_d}{T}K\log_2(\frac{4T_d+\rho^2_uT^2}{4T_d+4T_d(T-T_d)\rho^2_u}),
\label{eq.gain}
\end{align}
where the numerator minus the denominator within the $\log_2(\cdot)$ is equal
to $\rho^2_u(T^2-4TT_d+T_d^2)=\rho^2_u(T-T_d)^2\ge0$. Therefore, it is clear
that the optimized achievable rate is always larger than the unoptimized one.
The gain in rate offered by optimizing the energy allocated for training is
given by \eqref{eq.gain}.

\subsection{Optimized $\alpha$ and $T_d$ when $M\to\infty$}

For both MRC and ZF, under the peak power constraints, the average transmit
power of each user is $\rho=\rho_u/\sqrt{M}$, where $\rho_u$ is fixed. Define
$\rho/\rho_{\max}=\xi$. Consequently, the corresponding
$\rho_{\max}=\rho_u/(\xi\sqrt{M})$. When $M\to\infty$, applying Theorems
\ref{th.mrc} and \ref{th.zf}, we have the following cases:
\begin{itemize}
\item \emph{Case 1: $\rho_{\tau}$ is limited by $\rhomax$}
\begin{equation}\label{eq.optialpha1}
R^{(\mathcal{A})}(\alpha)=K(-\xi\alpha+1)\log_2\left(1+\frac{\alpha(\alpha-1)\rho_u^2T}{\xi\alpha-1}\right)
\end{equation}
Taking the derivative of \eqref{eq.optialpha1} and setting it to zero, we can
obtain the optimal $\alpha$ with one dimension search algorithm \cite{bova11}.
Then, the duration $T^*_d$ can be obtained by \eqref{eq.slab1} directly with
substituting $\alpha^*$.

\item \emph{Case 2: $\rho_d$ is limited by $\rhomax$}
\begin{equation}\label{eq.optialpha}
R^{(\mathcal{A})}(\alpha^*)=K\xi(-\alpha^*+1)\log_2\left(1+\frac{\alpha^*(\alpha^*-1)\rho^2_uT}{T-K}\right)
\end{equation}
where $\alpha^*=1-(T-K)/(\xi T)$ and $T^*_d=T-K$.

\item \emph{Case 3: Neither $\rho_d$ nor $\rho_{\tau}$ is not limited by
$\rhomax$}
\begin{equation}
R^{(\mathcal{A})}(\alpha^*)=\frac{T-K}{T}K\log_2\left(1+\frac{\rho_u^2T^2}{4(T-K)}\right)
\end{equation}
where $\alpha^*=1/2$ and $T_d^*=T-K$.
\end{itemize}

\section{Numerical Results} \label{sec:simulations}

In this section, we compare the achievable rates between equal power
allocation scheme and our optimized one under average and peak power
constraints. In our simulations, we set $\rho_{\max}=1.2\rho$, $K=10$, and
$T=196$. We consider the following schemes: 1) MRC, which refers to the case
where MRC receiver is used and the same average power is used in both training
and data transmission phases \cite{nglm13}. 2) A-MRC, which refers to the case
where MRC receiver is used, the training duration is $K$, and there is only
average power constraint. 3) AP-MRC, where MRC receiver is used, and both the
training duration and training energy are optimized under both the average and
peak power constraints. We will also consider the ZF variants of the above
three cases, namely ZF, A-ZF, and AP-ZF. The energy efficiency is defined as
$\eta^{\mathcal A}:=\objfunc{\mathcal A}/\rho$.

\begin{figure}[thp]
\centering\includegraphics[width=\figwidth]{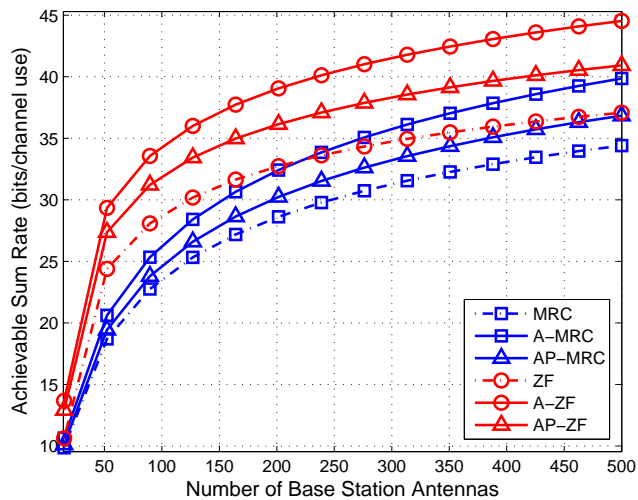}
\caption{Comparison between equal and optimized power allocations when the
number of base station antennas increases; $\rho_u=3$dB.}
\label{fig.image1}
\end{figure}

\begin{figure}[thp]
\centering \includegraphics[width=\figwidth]{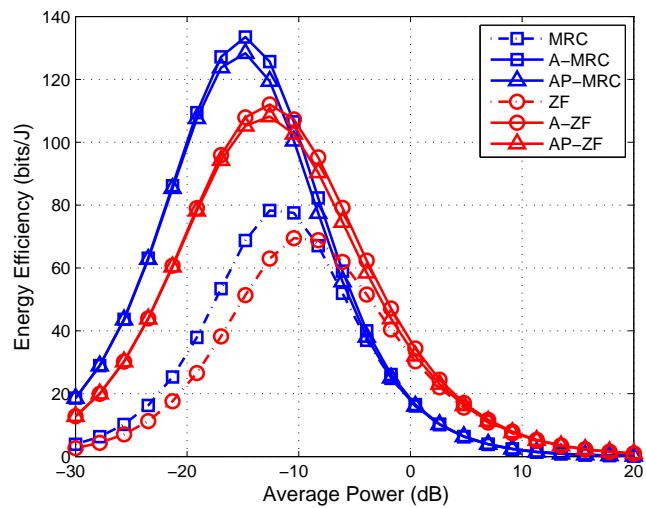}
\caption{Comparison of energy efficiency for different SNR $\rho$; where
$M=30$.} \label{fig.image2}
\end{figure}

In \figref{fig.image1}, we show the achieved rates of various schemes as the
number of antennas increases. It can be seen that the A-MRC (ZF) performs
better than the MRC (ZF) as well as the AP-MRC (ZF). In \figref{fig.image2},
the energy efficiency is shown as a function of $\rho$. It can be seen that
there is an optimal average transmitted power for maximum energy efficiency as
has been also observed before in \cite{nglm13}. It can also be seen that
optimized schemes, e.g., A-MRC (A-ZF) and AP-MRC (AP-ZF), show a significant
gain when SNR is low, since the power resource is scarce. Thus, the
optimization of power allocation and training duration plays much more
important role when $\rho$ is small than the case when $\rho$ is large. In
\figref{fig.image3}, we show the energy efficiency versus sum rate. In
particular, the optimized schemes achieve higher energy efficiencies. Also
from the simulations, we can see that ZF performs better than MRC at high SNR,
but worse when SNR is low.

Moreover, the impact of peak power constraint on achievable rates and energy
efficiencies for both MRC and ZF receivers can be observed through from
\figref{fig.image1} to \figref{fig.image3} clearly. They illustrate that when
peak power is limited at the training phase, the achievable rate with AP-MRC
and AP-ZF cannot be as high as the case with A-MRC and ZF. Although the
training period is increased, the time slot is still very precious when the
achievable rate needs to be maximized. In addition, we give an example about
energy efficiency versus the peak power limit in \figref{fig.image4}. It can
be observed that as the power limit increases, the energy efficiency with
AP-MRC and AP-ZF approaches the one with A-MRC and A-ZF, meaning that the
channel can be estimated accurately and more time slots are allocated to the
data phase.

\begin{figure}[htp]
\centering
\includegraphics[width=\figwidth]{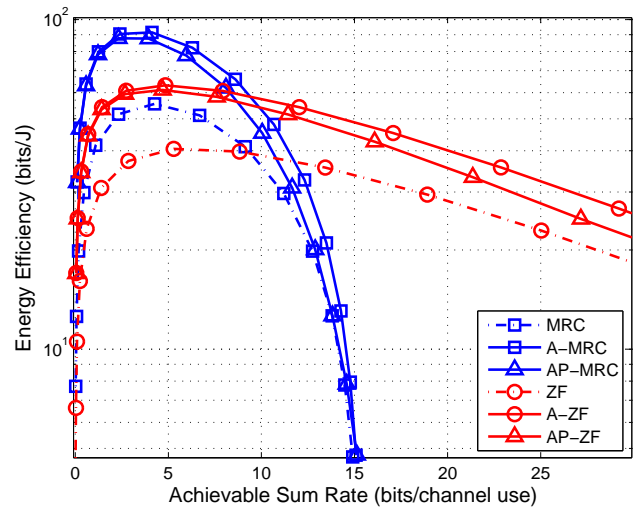}
\caption{Comparison of energy efficiency versus the achievable sum rate, where
$M=20$.}\medskip\label{fig.image3}
\end{figure}

\begin{figure}[htp]
\centering
\includegraphics[width=\figwidth]{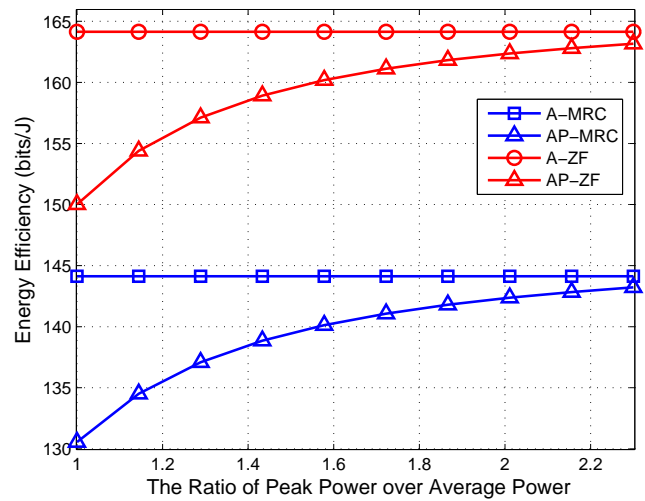}
\caption{Comparison of energy efficiency versus the ratio of peak power over
average power, where $M=50$ and $\rho=-10$dB.}\medskip\label{fig.image4}
\end{figure}

\section{Conclusions} \label{sec:conclusion}

In this paper, we considered an uplink multiuser cellular system where the
base station is equipped with multiple antennas. The channels were assumed to
be acquired by the base station through training symbols transmitted by the
mobile users. With estimated channels at the base station, we derived sum
rates that is achievable with MMSE channel estimation and MRC, ZF, and MMSE
detectors. Based on the derived rates, we were able to quantify the amount of
energy savings that are possible through the increase of either the number of
base station antennas, or the coherence interval length. We also quantified
the degrees of freedom that is possible in this scenarios, which is the same
as that of a point-to-point MIMO system. The achievability scheme when the
number of users is less than the number of base station antennas is linear:
zero-forcing is sufficient. Otherwise, nonlinear processing at the base
station is necessary to achieve the optimal total degrees of freedom.

For the case that both average and peak power constraints were considered, we
considered the problem of joint training energy and training duration
optimization for the MRC and ZF receivers so that the sum achievable rate is
maximized. We also performed a careful analysis of the convexity of the
problem and derived optimal solutions either in closed forms or in one case
through a one-dimensional search of a quasi-concave function. Our results were
illustrated and verified through numerical examples.

The effect of pilot contamination for multi-cell setup should also be the
importance influence on the achievable sum rate where the large scale fading
would be one of the main concerns, which will be considered as the future
work.

\section{Appendix}

\subsection{Proof of Lemma 1} Replacing $\alpha$ as $x$ in
\eqref{eq.newmrcsinr}, we need to verify that the second derivative of
\eqref{eq.newmrcsinr} with respective to $x$ is negative \cite{bova11}. The
first derivative of $g(x)$ is
\begin{equation}
g'(x)=\frac{(1-a)x^2-2bx+b}{(x^2-ax-b)^2}
\end{equation}
where $1-a-b<0$, $b>0$ and $x\in(0,1)$. Then, taking the second derivative of
$g(x)$, we have
\begin{equation}
g''(x)=\frac{2}{(x^2-ax-b)^3}\underbrace{\big((a-1)x^3+3bx^2-3bx+ab+b^2\big)}_{f(x)}
\end{equation}
From \remref{rem.noroot} and $b>0$, we know that $(x^2-ax-b)^3<0$. We need to
show $f(x)>0$.

Checking the boundary of $f(x)$, we know that
\begin{equation}
f(0)=ab+b^2=b(b+a)>b>0,
\end{equation}
\begin{equation}
f(1)=ab+b^2=a-1+ab+b^2=(a+b-1)(b+1)>0.
\end{equation}
Next, we need to consider the monotonicity of the function during the interval
$x\in(0,1)$. Taking the derivative of $f(x)$, we get
\begin{align}\label{eq.fxderiv}
f'(x)=&3(a-1)x^2+6bx-3b
\nonumber \\
&=3(a-1)(x^2+\frac{2b}{a-1}x-\frac{b}{a-1}),
\end{align}
which is a quadratic function.

When $a=1$, $f'(x)=6bx-3b=3b(2x-1)$. The function is decreasing until $x=1/2$
and increasing afterwards. Since
\begin{equation}
f(\frac{1}{2})=\frac{1}{4}b+b^2>0,
\end{equation}
it can be deduced that $f(x)>0$.

When $a\ne 1$, we know that $f'(1)=3(a+b-1)>0$, $f'(0)=-3b$, meaning that the
function $f(x)$ is decreasing first and increasing after the minimum point.

Here, we need to verify the minimum value of $f(x^*)$ is always greater than
0. According to \eqref{eq.fxderiv}, the minimum point given by the root of
$f'(x^*)=0$ is
\begin{equation}\label{eq.minpoint}
x^*=-\frac{b}{a-1}+\sqrt{\frac{b(a+b-1)}{(a-1)^2}},
\end{equation}
since $a+b>1$ and $b>0$. Substituting \eqref{eq.minpoint} into $f(x)$, we have
\begin{align}
f(x^*)= & [x((a-1)x^2+2bx-b)+bx^2-2bx+ab+b^2]\bigr|_{x=x^*} \nonumber \\
=
&\frac{b(a+b-1)}{a-1}\underbrace{(\frac{2b}{a-1}-2\sqrt{\frac{b(a+b-1)}{(a-1)^2}}+a)}_{h}.
\end{align}

For $a-1>0$,
\begin{align}
h=&\frac{2b}{a-1}-\frac{\sqrt{b(a+b-1)}}{a-1}+a
\nonumber \\
=&\frac{2}{a-1}\frac{b^2-b(a+b-1)}{b+\sqrt{b(a+b-1)}}+a \\
\mathop{>}\limits^{(a)}&a-\frac{2b}{b+\sqrt{b^2}}>0
\end{align}
where (a) is based on $a-1>0$. Therefore, $f(x)>0$.

For $a-1<0$,
\begin{align}
h&=\frac{2b}{a-1}+\frac{2\sqrt{b(a+b-1)}}{a-1}+a\\
&\mathop{<}^{(b)}\frac{2(1-a)}{a-1}+a<0,
\end{align}
where (b) is due to $b>1-a$. Hence, $f(x)>0$.

\subsection{Proof of Lemma 2} The derivative of $f(x)=x\ln(1+a/(b+cx))$, where
$a,b,c,x>0$, is
\begin{equation}
f'(x)=\ln(1+\frac{a}{cx+b})-\frac{acx}{(cx+a+b)(cx+b)}
\end{equation}
It is clear that $\lim_{x\to\infty}f'(x)=0$. If we can verify that the
function $f'(x)$ is monotonically decreasing, then $f'(x)$ is always positive.
Hence, we take the derivative of $f'(x)$, and obtain
\begin{equation} \label{eq.secderivtd}
f''(x)=-\frac{abc^2x+ac^2(a+b)x+2ac(a+b)b}{[(cx+b)(cx+a+b)]^2}<0,
\end{equation}
since $a,b,c,x>0$. This means that $f'(x)$ is decreasing. Therefore, $f'(x)$
is always positive, i.e., $f(x)$ is an increasing and concave function.

\bibliographystyle{IEEE}
\bibliography{refs}
\end{document}